\theoremstyle{plain}
  \newtheorem{theorem}{Theorem.}[section]
  \newtheorem{lemma}{Lemma.}[section]
  \newtheorem{corollary}{Corollary.}[section]
  \newtheorem{proposition}{Proposition.}[section]
\theoremstyle{definition}
\theoremstyle{remark}
\title{Classical and quantum walks on paths associated with exceptional Krawtchouk polynomials}
\author[1]{Hiroshi Miki\thanks{E-mail:hmiki@mc-jma.go.jp}}
\author[2]{Satoshi Tsujimoto}
\author[3]{Luc Vinet}
\affil[1]{ Meteorological College, Asahi-Cho, Kashiwa 277 0852,
Japan}
\affil[2]{Department of Applied Mathematics and Physics, Graduate School of Informatics,
Kyoto University, Sakyo-Ku, Kyoto 606 8501, Japan}
\affil[3]{Centre de recherches math\'{e}matiques, Universit\'{e} de Montr\'{e}al,
PO Box 6128, Centre-ville Station, Montr\'{e}al (Qu\'{e}bec), H3C 3J7,
Canada}
\begin{document}
\maketitle
 
\begin{abstract}
Classical and quantum walks on some finite paths are introduced.
It is shown that these walks have explicit solutions given in terms of exceptional Krawtchouk polynomials and their properties are explored.
In particular, fractional revival is shown to take place in the corresponding quantum walks.  
\end{abstract}

\section{Introduction}
The theory and the applications of exceptional orthogonal polynomials whose development was initiated in \cite{gomez2009extended,gomez2010extension,quesne2008exceptional, odake2009infinitely} keep being actively explored. We shall here focus on polynomials of the discrete type \cite{duran2014exceptional,duran2015higher}. Broadly speaking, exceptional discrete orthogonal polynomials are defined as forming complete systems of polynomials orthogonal with respect to a positive measure and as being eigenfunctions of a second order difference operator with the distinct feature in comparison to classical ensembles, that the exceptional families have gaps in their degrees. For recent overviews and additional references, the reader may consult \cite{gomez2020exceptional, duran2020exceptional}. This paper examines the bearing and the application of the exceptional Krawtchouk polynomials on two related fields: the birth and death processes and (continuous-time) quantum walks.

An important special category of continuous time Markov processes is that of the so-called birth and death processes (BDP) which track for instance the size of a univariate population \cite{feller}. The basic form of BDP has the time $t$ running from $0$ to $\infty$, the states labelled by the non-zero integers $\{n\}$ and transitions from the state $n$ occurring only between the adjacent states $n+1$ (birth) or $n-1$ (death) with probabilities that depend solely on the state $n$ as per the Markov property. Specifically, if the system is in the state $n$ at time $t$, the probability that, between $t$ and $t+\Delta t$, the transition $n\rightarrow{n+1}$ occurs is $\lambda_n \Delta t + o(\Delta t)$, the transition $n \rightarrow n-1$ takes place is $\mu_n \Delta t + o(\Delta t)$, more than one transitions are observed is $o(\Delta t)$ and no transition is seen is $1 - (\lambda_n + \mu_n) \Delta t + o(\Delta t)$.

The integral formula obtained by Karlin and McGregor \cite{karlin1957differential} for the transition probabilities $P_{nm}(t)$ between states $n$ and $m$ over the time period $t$ established the intimate link between BDP and orthogonal polynomials. (Expositions of this connection will be found in \cite{ismail1990birth,schoutens2012stochastic}.) This seminal result is central in the ongoing study of the role of orthogonal polynomials in stochastic processes and has prompted the examination of generalizations of birth and death processes that are underscored by orthogonal polynomials of various kinds. This is illustrated by the following sampling. Karlin and McGregor themselves \cite{karlin1975linear} and others \cite{khare2009rates,champagnat2012dirichlet,griffiths2016multivariate,fernandez2021quasi} used multivariate orthogonal polynomials to analyze multidimensional or composition BDP. For applications of matrix orthogonal polynomials to quasi-birth and death processes one may consult \cite{dette2008some} or \cite{grunbaum2008qbd,gruunbaum2008random} in the discrete time case and again for discrete time, random walks that multiple polynomials entail have been looked at recently in \cite{branquinho2021multiple}.

Pursuing in this vein, we shall address here the question of what generalized birth and death processes could exceptional polynomials allow to solve by focusing on the process associated to a family of univariate (level 2) exceptional Krawtchouk polynomials. As shall be seen, this will yield a model where the states are labelled by the integers belonging to the set $\{0, 1,\dots, N, N+3\}$ and the transitions take place between three neighbours.

We shall also pay attention to the quantum realm. There is a natural parallel discussed in \cite{grunbaum2013birth} between BDP and continuous time quantum walks on weighted paths when only neighboring vertices are dynamically linked; it arises by basically performing the Wick rotation ($t \rightarrow it$). The analog of the Karlin McGregor formula provides in this context the transition amplitude for the quantum walker to hop between the sites $n$ and $m$ during time $t$. Such processes arise as the one-excitation dynamics of $XY$ quantum spin chains with non-uniform nearest-neighbour couplings and can be simulated by the propagation of photons in evanescently coupled waveguide arrays \cite{perez2013coherent,chapman2016experimental,bosse2017coherent}. These quantum walks are hence particularly relevant in the design of (one-dimensional) devices that realize with a minimum of external controls the end-to-end perfect state transfer (PST), which amounts to situations when transition amplitudes of modulus one can be achieved at certain times \cite{kay2010perfect}. Of interest also for the purpose of generating entangled states, is the construction of systems enabling fractional revival (FR), the phenomenon where an initial state is reproduced periodically at a number of fixed locations \cite{robinett2004quantum}. Continuous-time are also useful in this endeavour.

Not surprisingly orthogonal polynomials play again a key role in these matters. First, the general construction of systems with nearest neighbour interactions that achieve PST and/or FR involves the solution of inverse spectral problems that relies on the theory of univariate orthogonal polynomials \cite{gladwell2005inverse, vinet2012construct, genest2016quantum}. Second, the determination of analytic models with the advantages that these have, calls upon special polynomials that are fully characterized and explicitly known. Hence, as in the classical case, a search for solvable quantum walks corresponding to various families of univariate polynomials has been undertaken. Those corresponding to the Krawtchouk and dual Hahn polynomials were the first ones found \cite{albanese2004mirror}. A model associated to a special case of the $q$-Racah polynomials was exhibited \cite{vinet2012construct} as well as one corresponding to the dual $-1$ Hahn polynomials \cite{vinet2012dual, coutinho2019perfect} - a representative family of the interesting sets obtained from $q\rightarrow -1$ limits of the Askey-Wilson polynomials and their relatives. (A standard reference for information on most of these polynomials is \cite{koekoek2010hypergeometric}.) As a matter of fact, the search for models with FR \cite{genest2016quantum, lemay2016analytic} has led to the discovery of orthogonal polynomials of the para-type (see also \cite{vinet2012krawtchouk}) which arise through non-conventional truncations of infinite sets of polynomials belonging to the Askey scheme. A review of these various analytic systems is provided in \cite{bosse2017coherent}. Furthermore, as in the classical case, these enquiries suggested to look at the higher-dimensional quantum systems connected to multivariate polynomials \cite{miki2012quantum, miki2019quantum, miki2020perfect}.

We shall expand this body of work by examining also in this paper the transposition to the quantum domain of the generalized BDP associated to the exceptional Krawtchouk polynomials that we shall have introduced. Under some conditions, the corresponding quantum model will be shown to exhibit perfect return and some FR but PST will not be observed. 

The remainder of the paper is structured as follows. The essential properties of the exceptional $X_{\ell}$-Krawtchouk polynomials $\hat{K}_n ^{(\ell )}(x;p)$ will be recorded in Section 2. On the basis of the study \cite{miki2015new}, it will be stressed that these polynomials obey a $2\ell + 3$-term recurrence relation; the $7$-term one (for $\ell =2$) will be given explicitly and will be central in the applications to be pursued. In Section 3, a generalized birth and death process is defined and shown to be solvable with the help of the $X_2$-Krawtchouk polynomials. Its stationary distribution is provided. Section 4 examines the quantum walk generated by the Hamiltonian taken to be the $7$-diagonal matrix corresponding to the recurrence relation of the orthonormalized $X_2$-Krawtchouk polynomials. The transition amplitudes are expressed in terms of those polynomials and their properties are studied. Perfect return and fractional revival are shown to occur when the parameter $p$ is equal to $\frac{1}{2}$. Examples of dynamical evolution are also depicted graphically. Section 5 comprises concluding remarks.

\section{Ordinary and Exceptional Krawtchouk polynomials}
In this section, we first review the ordinary Krawtchouk polynomials. We then introduce the exceptional Krawtchouk polynomials and examine their properties. 
\subsection{Krawtchouk polynomials}
Let $0<p<1$ and $N$ be a positive integer. The Krawtchouk polynomials are usually defined in terms of the hypergeometric series \cite{koekoek2010hypergeometric}:
\begin{equation}
K_n^N(x;p)={}_{2}F_{1}\left(
  \begin{matrix}
   -n,-x\\
   -N
  \end{matrix}
  ;\frac{1}{p}\right)
\end{equation}
for $n,x=0,1,\dots ,N$. 
They are orthogonal with respect to the binomial distribution function:
\begin{equation}
\mathcal{L}[K_m^N(x;p)K_n^N(x;p)w_x]=\sum_{x=0}^N K_m^N(x;p)K_n^N(x;p)w_x=h_n\delta_{mn},
\end{equation}
where $\delta_{mn}$ is the Kronecker delta and
\begin{align}
\begin{split}
h_n&=\frac{(-1)^nn!}{(-N)_n}\left( \frac{q}{p}\right)^n,\quad (q=1-p),\\
w_x&=\binom{N}{x}p^xq^{N-x}=\frac{N!}{x!(N-x)!}p^xq^{N-x}.
\end{split}
\end{align}
One can easily see that the map $(p,x)\mapsto (q,N-x)$ does not change the weight function $w_x$. We thus find from the uniqueness of OPs that 
\begin{equation}\label{kraw:sym}
K_n^N(x;q)=\left( -\frac{p}{q}\right)^{n}K_n^N(N-x;p),\quad n=0,1,\ldots ,N.
\end{equation}
It should be remarked here that the relation \eqref{kraw:sym} holds for all $x,N \in \mathbb{Z}$.
Krawtchouk polynomials belong to the class of discrete classical OPs; they satisfy the following Sturm-Liouville difference equation:
\begin{equation}\label{kraw:diff}
-nK_n^N(x;p)=p(N-x)K_n^N(x+1;p)-\{ p(N-x)+qx\}K_n^N(x;p)+qxK_n^N(x-1;p)
\end{equation}
and from the exchange $n\leftrightarrow x$, one gets their recurrence relations:
\begin{equation}\label{kraw:rec}
-xK_n^N(x;p)=p(N-n)K_{n+1}^N(x;p)-\{ p(N-n)+qn\}K_n^N(x;p)+qnK_{n-1}^N(x;p).
\end{equation}
The forward shift operator relation for Krawtchouk polynomials will prove useful:
\begin{equation}\label{kraw:forward}
K_n^N(x+1;p)-K_n^N(x;p)=-\frac{n}{Np}K_n^{N-1}(x-1;p).
\end{equation}

\subsection{Exceptional Krawtchouk polynomials}
Let $\ell $ be a positive integer and introduce the index set
\begin{equation}
\Lambda_\ell  = \{ 0,1,\ldots , N,N+\ell +1\}.
\end{equation}
Furthemore, we consider the grid points
\begin{equation}
X_N=\{ -1,0,1,\ldots ,N\},
\end{equation}
which implies that we consider the space 
\begin{equation}
\{ \pi (x) \mod (x+1)x(x-1)\cdots (x-N)~|~\pi (x)\in \mathbb{R}[x]\}.
\end{equation}
For $x\in X_N$ and $n\in \Lambda_{\ell } $, the exceptional Krawtchouk polynomials of degree $\ell $ ($X_{\ell }$-Krawtchouk polynomials) can be defined via the Darboux transformation \cite{mtv}:
\begin{equation}\label{ekraw:def}
 \hat{K}^{(\ell )}_n(x;p)=
 \begin{cases}
{\cal F}_{N}[K_n^N(x;p)]   &  (n \ne N+\ell +1)\\
\noalign{\vskip 2mm}
\lim_{M\to N}{\cal F}_{M}[K_n^M(x;p)]   &  (n = N+\ell +1)\\
 \end{cases},
\end{equation}
where
\begin{equation}
  {\cal F}_{N} = (N-x) f_{\ell}^p(x)\,T +(1+x) f_{\ell}^p(x+1)\, I,
\end{equation}
with $f^p_{\ell }(x)=K^{-N-2}_{\ell }(x-N-1;p)$, $If(x)=f(x)$ and $Tf(x)=f(x+1)$.\\
The $X_{\ell}$-Krawtchouk polynomials satisfy the following Sturm-Liouville difference equation:
\begin{equation}
 {\cal F}_{N}  \circ {\cal B}_{N} [\hat{K}_n^{N,\ell}(x;p)] = (N+\ell+1 -n ) \hat{K}_n^{N,\ell}(x;p)
\end{equation} 
where
\begin{align}
 {\cal B}_{N} = \dfrac{1}{f_{\ell}^p(x)} \left(p I + q\, T^{-1}\right),
\end{align}
with $T^{-1}f(x)=f(x-1)$.
Their orthogonality relation is given by 
\begin{equation}\label{ekraw:orthogonality}
\hat{\mathcal{L}}[\hat{K}^{(\ell )}_m(x;p)\hat{K}^{(\ell )}_n(x;p)]=\sum_{x=-1}^N\hat{K}^{(\ell )}_m(x;p)\hat{K}^{(\ell )}_n(x;p)\hat{w}_x=\hat{h}_n\delta_{mn}
\end{equation}
with
\begin{align}
\begin{split}
\hat{h}_n&=\frac{(-1)^nn!}{(-N)_n}\left( \frac{q}{p}\right)^n(N+1)(N+l-n+1),\\
\hat{w}_x&= \binom{N+1}{x+1}\frac{p^{x+1}q^{N-x}}{f^p_{\ell }(x)f^p_{\ell }(x+1)}.
\end{split}
\end{align}
It should be mentioned here that the weight function $\hat{w}_x$ is positive when $l$ is even.
Like the Krawtchouk polynomials, the $X_l$-Krawtchouk polynomials are also symmetric with respect to $x$ and $p$.
\begin{proposition}
\begin{equation}\label{ekraw:sym}
\hat{K}^{(\ell )}_n(x;q)=\left( -\frac{p}{q}\right)^{n+l}\hat{K}^{(\ell )}_n(N-x-1;p)
\end{equation}
\end{proposition}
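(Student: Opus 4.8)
The plan is to obtain \eqref{ekraw:sym} directly from the Darboux representation \eqref{ekraw:def} together with the classical symmetry \eqref{kraw:sym}. Write ${\cal F}_{N}^{p}=(N-x)f_{\ell}^{p}(x)\,T+(1+x)f_{\ell}^{p}(x+1)\,I$ to display the $p$-dependence, so that $\hat K^{(\ell)}_n(x;q)={\cal F}_{N}^{q}\bigl[K_n^N(\,\cdot\,;q)\bigr](x)$ for $n\neq N+\ell+1$. The first step is to record the symmetry of the gauge factor. Since \eqref{kraw:sym} is valid for all integer arguments, I may apply it with the (negative) upper index $-N-2$ in place of $N$ and $\ell$ in place of $n$, giving $K_\ell^{-N-2}(y;q)=(-p/q)^{\ell}K_\ell^{-N-2}(-N-2-y;p)$. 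Taking $y=x-N-1$ and using $f_\ell^p(x)=K_\ell^{-N-2}(x-N-1;p)$ yields
\[
f_\ell^{q}(x)=\Bigl(-\tfrac{p}{q}\Bigr)^{\ell}f_\ell^{p}(N-x).
\]

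Next comes the main computation. I expand $\hat K^{(\ell)}_n(x;q)$ from the definition of ${\cal F}_N^{q}$, substitute $K_n^N(x;q)=(-p/q)^nK_n^N(N-x;p)$ and $K_n^N(x+1;q)=(-p/q)^nK_n^N(N-x-1;p)$ from \eqref{kraw:sym}, together with the identity just obtained for $f_\ell^q(x)$ and $f_\ell^q(x+1)$. Each of the two terms then carries the common factor $(-p/q)^{n+\ell}$, and pulling it out leaves
\[
(N-x)f_\ell^{p}(N-x)K_n^N(N-x-1;p)+(1+x)f_\ell^{p}(N-x-1)K_n^N(N-x;p).
\]
The key observation is that the involution $x\mapsto N-x-1$ of the grid $X_N=\{-1,0,\dots,N\}$ intertwines ${\cal F}_N^p$ with itself: at $x'=N-x-1$ one has $N-x'=x+1$ and $1+x'=N-x=x'+1$, so the two summands above are precisely the two summands of ${\cal F}_N^p\bigl[K_n^N(\,\cdot\,;p)\bigr](N-x-1)=\hat K^{(\ell)}_n(N-x-1;p)$. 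This proves \eqref{ekraw:sym} for $n\neq N+\ell+1$; since the chain of substitutions is valid for every integer $x$, it holds in particular on $X_N$ and hence as an identity in the relevant polynomial quotient.

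For the exceptional index $n=N+\ell+1$ I rerun the identical computation with an auxiliary parameter $M$ replacing $N$ in ${\cal F}_M$ and in its gauge factor $f_\ell^p$ (keeping $n=N+\ell+1$ itself fixed), using that \eqref{kraw:sym} persists for all such $M$ through the ${}_2F_1$ representation, and then pass to the limit $M\to N$ as in the second line of \eqref{ekraw:def}. Continuity of the limit preserves the identity and reproduces the exponent $(-p/q)^{n+\ell}$.

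The substance here is essentially routine substitution; I expect the only points requiring care to be (a) confirming that \eqref{kraw:sym} may legitimately be invoked at the non-standard upper index $-N-2$ — which it may, being stated for all integers, with the hypergeometric form available as a fallback — and (b) keeping straight in the last step which objects depend on $M$ and which remain anchored at $N$ inside ${\cal F}_M$. Everything else is direct bookkeeping of the factors $(-p/q)^n$ and $(-p/q)^\ell$ and of the coefficient swap $N-x\leftrightarrow x+1$.
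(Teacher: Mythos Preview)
Your argument is correct and follows essentially the same route as the paper: first derive $f_\ell^{q}(x)=(-p/q)^{\ell}f_\ell^{p}(N-x)$ from \eqref{kraw:sym} at upper index $-N-2$, then substitute this together with \eqref{kraw:sym} for $K_n^N$ into the Darboux form and recognise the result as $\hat K^{(\ell)}_n(N-x-1;p)$. You are in fact slightly more thorough than the paper, which does not separately address the exceptional index $n=N+\ell+1$; your remark that the same substitution goes through with the auxiliary $M$ and survives the limit is the right way to close that case.
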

\begin{proof}
From relation \eqref{kraw:sym}, one can check that
\begin{align}
\begin{split}
f^q_{\ell }(x)&=K^{-N-2}_l(x-N-1;q)\\
&=\left( -\frac{p}{q}\right)^{\ell } K^{-N-2}_{\ell }(-N-2-(x-N-1);p)\\
&=\left( -\frac{p}{q}\right)^{\ell } f_{\ell }^p(N-x).
\end{split}
\end{align}
Therefore we have
\begin{align}
\begin{split}
&\hat{K}^{(\ell )}_n(x;q)\\
&=(N-x)f^q(x)K_n^N(x+1;q)+(1+x)f^q(x+1)K_n^N(x;q)\\
&=\left( -\frac{p}{q}\right)^{n+\ell }\left\{ (N-x)f^p_{\ell }(N-x)K_n^N(N-x-1;p)+ (1+x)f^p_{\ell }(N-x-1)K_n^N(N-x;p)\right\}\\
&=\left( -\frac{p}{q}\right)^{n+\ell }\hat{K}^{(\ell )}_n(N-x-1;p).
\end{split}
\end{align}
This completes the proof.
\end{proof}


It is known that exceptional orthogonal polynomials verify recurrence relations with more terms than those of the standard OPs \cite{miki2015new}.
In the case of the $X_{\ell }$-Krawtchouk polynomials, we have $2\ell +3$-term recurrence relations.
\begin{proposition}
For the $X_{\ell }$-Krawtchouk polynomials $K^{(l)}_n(x;p)$, the following relation holds:
\begin{equation}\label{ekraw:rec}
K^{-N-1}_{\ell +1}(x-N)\hat{K}^{(\ell )}_n(x)=\sum_{k=-\ell -1}^{\ell +1}\beta_{n,k}\hat{K}^{(\ell )}_{n+k}(x) \quad (\exists \beta_{n,k}\in \mathbb{R}).
\end{equation}
for $x\in X_N$.
\end{proposition}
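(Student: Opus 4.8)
The plan is to transport the ordinary three-term recurrence \eqref{kraw:rec} through the Darboux factorization \eqref{ekraw:def}, the recurrence acquiring extra terms because the ``recurrence variable'' $x$ is forced to be replaced by the degree-$(\ell+1)$ polynomial $\pi(x):=K^{-N-1}_{\ell+1}(x-N;p)$. Two preliminary facts are needed. First, iterating \eqref{kraw:rec} gives
\begin{equation*}
\pi(x)\,K_n^N(x;p)=\sum_{k=-\ell-1}^{\ell+1}\gamma_{n,k}\,K_{n+k}^N(x;p)
\end{equation*}
for suitable $\gamma_{n,k}\in\mathbb{R}$ (terms with index outside $\{0,\dots,N\}$ being absent). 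Second, substituting $\hat K_n^{(\ell)}={\cal F}_N[K_n^N]$ into the Sturm--Liouville relation ${\cal F}_N\circ{\cal B}_N[\hat K_n^{(\ell)}]=(N+\ell+1-n)\hat K_n^{(\ell)}$ and cancelling the outer ${\cal F}_N$ (legitimate since ${\cal F}_N$ is injective on $\mathrm{span}\{K_0^N,\dots,K_N^N\}$, which it sends onto the linearly independent $\hat K_0^{(\ell)},\dots,\hat K_N^{(\ell)}$) yields the intertwining relation
\begin{equation*}
{\cal B}_N[\hat K_n^{(\ell)}(x)]=(N+\ell+1-n)\,K_n^N(x;p),\qquad 0\le n\le N .
\end{equation*}

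Since $\{\hat K_m^{(\ell)}\}_{m\in\Lambda_\ell}$ is a basis by \eqref{ekraw:orthogonality}, I would start from the ansatz $\pi(x)\hat K_n^{(\ell)}(x)=\sum_{m\in\Lambda_\ell}\beta_{n,m}\hat K_m^{(\ell)}(x)$ and apply ${\cal B}_N$ to both sides. The right-hand side becomes $\sum_m\beta_{n,m}(N+\ell+1-m)K_m^N$. On the left, ${\cal B}_N=\frac{1}{f^p_\ell(x)}(pI+qT^{-1})$ does not commute with multiplication by $\pi$, and a direct computation shows ${\cal B}_N[\pi\,\hat K_n^{(\ell)}]-\pi\,{\cal B}_N[\hat K_n^{(\ell)}]=\frac{q\,(\pi(x-1)-\pi(x))}{f^p_\ell(x)}\,\hat K_n^{(\ell)}(x-1)$. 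The decisive point is that the forward shift relation \eqref{kraw:forward}, applied to $\pi$, shows $\pi(x-1)-\pi(x)$ to be a constant multiple of $f^p_\ell(x)$; the offending denominator therefore cancels, and one is left with the clean identity
\begin{equation*}
(N+\ell+1-n)\,\pi(x)K_n^N(x;p)-c\,\hat K_n^{(\ell)}(x-1)=\sum_{m\in\Lambda_\ell}\beta_{n,m}(N+\ell+1-m)\,K_m^N(x;p)
\end{equation*}
with $c$ an explicit constant.

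To finish, I would observe that the left-hand side is a combination of $K^N_{n+k}$ with $|k|\le\ell+1$: for $\pi K_n^N$ this is the first preliminary fact, and for $\hat K_n^{(\ell)}(x-1)={\cal F}_N[K_n^N](x-1)$ it follows by expanding this polynomial of degree $n+\ell+1$ in the $K_j^N$ and checking, after a shift $x\mapsto x+1$ of the summation variable in the Krawtchouk orthogonality (using $(x+1)w_{x+1}=\tfrac{p}{q}(N-x)w_x$), that its inner products with $K_j^N$ vanish for $j\le n-\ell-2$. Matching coefficients of the linearly independent $K^N_j$ and dividing by the factors $N+\ell+1-m\ (\neq 0$ for $0\le m\le N)$ then forces $\beta_{n,m}=0$ for all $m\notin\{n-\ell-1,\dots,n+\ell+1\}$, which is \eqref{ekraw:rec}.

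The main obstacle is the extremal index $N+\ell+1$. For it ${\cal F}_N\circ{\cal B}_N[\hat K^{(\ell)}_{N+\ell+1}]=0$, so ${\cal B}_N[\hat K^{(\ell)}_{N+\ell+1}]\in\ker{\cal F}_N$ is not of the form $(\mathrm{const})\,K_m^N$, and the coefficient-matching above does not by itself control $\beta_{n,N+\ell+1}$; moreover the row $n=N+\ell+1$ is special because $\hat K^{(\ell)}_{N+\ell+1}$ is defined through the limit in \eqref{ekraw:def}. I would dispose of these endpoint cases either by carrying out the whole construction at a generic truncation level $M$ and then letting $M\to N$, or---for the vanishing of $\beta_{n,N+\ell+1}$ with $n<N$---by pairing directly in \eqref{ekraw:orthogonality}, writing $\hat{\cal L}[\pi\hat K_n^{(\ell)}\hat K_{N+\ell+1}^{(\ell)}]=\hat{\cal L}[\hat K_n^{(\ell)}\,\pi\hat K_{N+\ell+1}^{(\ell)}]$ and invoking the band structure already established at the index $N$ together with $\hat K_n^{(\ell)}\perp\hat K_N^{(\ell)}$.
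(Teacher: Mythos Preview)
Your argument is essentially correct but takes a genuinely different route from the paper's own (suppressed) proof. The paper works on the orthogonality side: it writes $\beta_{n,m}\hat h_m=\hat{\mathcal L}\bigl[g_\ell(x)\hat K_m\hat K_n\bigr]$ with $g_\ell(x)=K_{\ell+1}^{-N-1}(x-N)$, substitutes the Darboux definition $\hat K_m=\mathcal F_N[K_m^N]$, and after using the Sturm--Liouville equation for $f_\ell^p$ together with the same forward-shift identity $f_\ell^p(x)=\frac{p(N+1)}{\ell+1}\bigl(g_\ell(x)-g_\ell(x-1)\bigr)$ that you invoke, collapses the $\hat w$-sum to an ordinary Krawtchouk sum $\sum_x w_x K_n(x)B(x)$ with $\deg B=m+\ell+1$; vanishing for $|n-m|>\ell+1$ then comes straight from Krawtchouk orthogonality. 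Your approach instead stays on the operator side, transporting the expansion through $\mathcal B_N$ and exploiting the intertwining $\mathcal B_N[\hat K_n^{(\ell)}]=(N+\ell+1-n)K_n^N$; the same forward-shift identity is what makes your commutator term $\tfrac{q(\pi(x-1)-\pi(x))}{f_\ell^p(x)}\hat K_n(x-1)$ lose its denominator.

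What each buys: the paper's computation is symmetric in $m,n$ from the outset, so the two-sided band and the extremal row/column $N+\ell+1$ are handled uniformly without a separate endpoint discussion. Your route is more structural and makes transparent \emph{why} the multiplier must be $g_\ell$ (it is the unique degree-$(\ell{+}1)$ polynomial whose backward difference is proportional to $f_\ell^p$, so that $\mathcal B_N$ almost commutes with multiplication by it); the price is exactly the endpoint bookkeeping you flag, since $\mathcal B_N[\hat K_{N+\ell+1}^{(\ell)}]\in\ker\mathcal F_N$ contaminates the coefficient matching and your ``band structure already established at the index $N$'' does not by itself force $\beta_{n,N+\ell+1}=0$ for $n<N$. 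Either of your proposed fixes works, but the cleanest is the symmetry $\beta_{n,m}\hat h_m=\beta_{m,n}\hat h_n$ (immediate from $\hat{\mathcal L}[g_\ell\hat K_m\hat K_n]$), which reduces the missing entries to ones your main argument already covers.
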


\if0
\begin{proof}
In the proof, we write $\hat{K}_{m}(x)=\hat{K}_{m}^{(\ell )}(x;p)$, $K_n(x)=K_n^N(x;p)$ and $g_{\ell }(x)=K^{-N-1}_{\ell +1}(x-N;p)$ for simplicity.
Since $\{ \hat{K}_n(x)\}_{n\in \Lambda_l }$ spann the $N+1$-dimensional space, we can write
\begin{equation}
g_{\ell }(x)\hat{K}_n(x)=\sum_{k \in \Lambda_l }\beta _{n,k}\hat{K}_n(x),\quad x=-1,0,\dots ,N.
\end{equation}
Applying the linear functinal $\hat{\mathcal{L}}$ to $g_l(x)\hat{K}_m^{(l)}(x)\hat{K}_n^{(l)}(x)$, we have from \eqref{ekraw:orthogonality}
\begin{equation}\label{ekraw:rec-proof1}
\hat{\mathcal{L}}[g_l(x)\hat{K}_m(x)\hat{K}_n(x)]=\beta_{n,m}\hat{h}_{m}.
\end{equation}
Here, from the definition \eqref{ekraw:def}, the l.h.s. of \eqref{ekraw:rec-proof1} can be rewritten as follows:
\begin{align}
\hat{\mathcal{L}}[g_l(x)\hat{K}_m(x)\hat{K}_n(x)]=T_1+T_2+T_3+T_4,
\end{align}
where 
\begin{align}
\begin{split}
T_1&=\hat{\mathcal{L}}[(N-x)^2g_l(x)(f_l^p(x))^2K_{m}(x+1)K_n(x+1)]\\
&=\sum_{x=-1}^N\hat{w}_x(N-x)^2g_l(x)(f_l^p(x))^2K_{m}(x+1)K_{n}(x+1)\\
&=\sum_{x=0}^Nw_x q(N+1)(N-x+1)\frac{f_l^p(x-1)}{f_l^p(x)}g_l(x-1)K_m(x)K_n(x),\\
T_2&=\hat{\mathcal{L}}[(N-x)(1+x)g_l(x)f_l^p(x)f_l^p(x+1)K_{m}(x+1)K_{n}(x)]\\
&=\sum_{x=0}^N w_xp(N+1)(N-x)g_l(x)K_m(x+1)K_n(x),\\
T_3&=\hat{\mathcal{L}}[(N-x)(1+x)g_l(x)f_l^p(x)f_l^p(x+1)K_{m}(x)K_{n}(x+1)]\\
&=\sum_{x=0}^N w_xq(N+1)xg_l(x-1)K_m(x-1)K_n(x),\\
T_4&=\hat{\mathcal{L}}[(x+1)^2g_l(x)(f_l^p(x+1))^2K_{m}(x)K_{n}(x)]\\
&=\sum_{x=0}^N w_xp(N+1)(x+1)\frac{f_l^p(x+1)}{f_l^p(x)}g_l(x)K_m(x)K_n(x).
\end{split}
\end{align}
Notice from \eqref{kraw:diff} and \eqref{kraw:forward} that 
\begin{align}
\begin{split}
lf_l^p(x)&=p(x+1)f_l^p(x+1)-(p(x+1)+q(N-x+1))f_l^p(x)\\
&+q(N-x+1)f_l^p(x-1),\\
f_l^p(x)&=\frac{p(N+1)}{l+1}(g_l(x)-g_l(x-1)).
\end{split}
\end{align}
Then we obtain
\begin{align}
\begin{split}
T_4&=\sum_{x=0}^N w_x (l+1)(x+1)g_l(x)K_m(x)K_n(x)\\
   &+\sum_{x=0}^N w_x p(N+1)(x+1)\frac{f_l^p(x+1)}{f_l^p(x)}g_l(x-1)K_m(x)K_n(x)
\end{split}
\end{align}
to find 
\begin{align}
\begin{split}
&T_1+T_4=\sum_{x=0}^Nw_x K_n(x)A(x),\\
&A(x)=K_m(x)\{ (l+1)(x+1)g_l(x)\\
&\qquad +(N+1)(p(x+1)+q(N-x+1)+l)g_l(x-1)\}.
\end{split}
\end{align}
Therefore we finally find
\begin{align}
\begin{split}
&T_1+T_2+T_3+T_4=\sum_{x=0}^N w_x K_n(x)B(x),\\
&B(x)=A(x)+p(N+1)(N-x)g_l(x)K_m(x+1)\\
&\qquad +q(N+1)(x+1)g_l(x)K_n(x-1).
\end{split}
\end{align} 
Since $\deg B(x)=m+l+1$, it is straightforward to see from the orthogonality relation of Krawtchouk polynomials that
\begin{equation}
\hat{\mathcal{L}}[g_l(x)\hat{K}_m(x)\hat{K}_n(x)]=\beta_{n,m}\hat{h}_m=0,\quad |n-m|>l+1.
\end{equation}  
This completes the proof.
\end{proof}
\fi
We give the explicit form of the recurrence relation \eqref{ekraw:rec} when $\ell =2$:
\begin{align}\label{ekraw:rec2}
\begin{split}
 \lambda_x\hat{K}_n^{(2)}(x;p)&= \alpha_n\hat{K}_{n+3}^{(2)}(x;p)+\beta_n \hat{K}_{n+2}^{(2)}(x;p)+\gamma_n\hat{K}_{n+1}^{(2)}(x;p)\\
 &-(\alpha_n+\beta_n+\gamma_n+\delta_n+\epsilon_n+\zeta_n)\hat{K}_n^{(2)}(x;p)\\
 &+\delta_n \hat{K}_{n-1}^{(2)}(x;p)+\epsilon_n\hat{K}_{n-2}^{(2)}(x;p)+\zeta_n \hat{K}_{n-3}^{(2)}(x;p)
 \end{split}
 \end{align}
with
\begin{align}\label{ekraw:rec2-par}
\begin{split}
 \lambda_x&=-(K^{-N-1}_3(x-N;p)-K^{-N-1}_3(-1-N;p))\\
 \alpha_n&=\frac{(N-n+3)(N-n-2)_2}{(N+1)_3},\\
 \beta_n&=\frac{3(N-n+3)(N-n-1)_2(q-p)}{p(N+1)_3},\\
 \gamma_n&=\frac{3(N-n+3)(N-n)
 \left\{ N-n+1-(4N-5n+2)pq\right\}}{p^2(N+1)_3},\\
 \delta_n&=\frac{3n(N-n+3)q
 \left\{ N-n+2-(4N-5n+7)pq\right\}}{p^3(N+1)_3},\\
 \epsilon_n&=\frac{3(N-n+3)(n-1)_2(q-p)q^2}{p^3(N+1)_3},\\
 \zeta_n&=\frac{(n-2)_3q^3}{p^3(N+1)_3}.  
\end{split}
\end{align}

\section{Classical random walks associated with exceptional Krawtchouk polynomials}
In this section, we introduce some classical random walks which can be regarded as generalized birth and death processes. 
We then show that these are exactly solvable with a parametrization derived from the exceptional Krawtchouk polynomials.\par 
We consider stationary Markov processes whose state space is $\Lambda_2=\{ 0,1,\ldots ,N,N+3\}$.
With $i,j \in \Lambda _2$, we assume that the probabilities $P_{ij}(\Delta_t)$ for the system to evolve from state $i$ to $j$ in infinitesimal time $\Delta_t$ are given by:
\begin{align}\label{gbdp:def}
p_{ij}(\Delta_t)=\begin{cases}
\alpha_i\Delta_t+o(\Delta_t), &j=i+3\\
\beta_i\Delta_t+o(\Delta_t), &j=i+2\\
\gamma_i\Delta_t+o(\Delta_t), &j=i+1\\
1-(\alpha_i+\beta_i+\gamma_i +\delta _i+\epsilon_i+\zeta_i )\Delta_t+o(\Delta_t), &j=i\\
\delta_i\Delta_t+o(\Delta_t), &j=i-1\\
\epsilon_i\Delta_t+o(\Delta_t), &j=i-2\\
\zeta_i\Delta_t+o(\Delta_t), &j=i-3\\
o(\Delta_t), & |i-j|>3
\end{cases},
\end{align}
where $\alpha_i,\beta_i,\gamma_i,\delta_i,\epsilon_i,\zeta_i \ge 0$.
Unlike the ordinary birth and death processes, here as shown in Fig. \ref{gbdp}, the transition occurs among three-nearest neighbors although state $N+3$ can only be exchanged with state $N$.
\begin{figure}[htbp]
\begin{center}
\begin{tikzpicture}
\tikzset{ccc/.style={circle, fill=blue, text=white, text width=0.8cm, text centered, rounded corners, minimum height=0.5cm}};
    \node[ccc] (n-3) {$n-3$};
    \node[ccc,right=0.5cm of n-3] (n-2) {$n-2$};
    \node[ccc,right=0.5cm of n-2] (n-1) {$n-1$};
    \node[ccc,right=0.5cm of n-1] (n) {$n$};
    \node[ccc,right=0.5cm of n] (n+1) {$n+1$};
    \node[ccc,right=0.5cm of n+1] (n+2) {$n+2$};
    \node[ccc,right=0.5cm of n+2] (n+3) {$n+3$};
    \node[right=0.2cm of n+3] (l) {$\cdots $};
    \node[ccc,right=1cm of n+3] (N) {$N$};
    \node[ccc,right=0.5cm of N] (N+3) {$N+3$};
	\path [->,ultra thick] (n) edge [bend left]  node [above] {$\gamma_n$} (n+1)
	(n) edge [bend left]  node [above] {$\beta_n$} (n+2)
	(n) edge [bend left]  node [above] {$\alpha_n$} (n+3)
	(N) edge [bend left]  node [above] {$\alpha_N$} (N+3);
	\path [->,ultra thick] (n) edge [bend left]  node [below] {$\delta_n$} (n-1)
	(n) edge [bend left] node [below] {$\epsilon_{n}$} (n-2)
	(n) edge [bend left]  node [below] {$\zeta_n$} (n-3)
	(N+3) edge [bend left] node [below] {$\zeta_{N+3}$} (N);
	
\end{tikzpicture}
\end{center}
\caption{The conceptual image of the generalized birth and death processes.}
\label{gbdp}
\end{figure}
By hypothesis the conditional probability of the process up to time $t$ depends only on the state of the process at time $t$, which means that 
\begin{equation}
P_{ij}(s+t)=\sum_{k\in \Lambda_2} P_{ik}(s)P_{kj}(t).
\end{equation}
This results in 
\begin{equation}
P(s+t)=P(s)P(t),\quad P(t)=(P_{ij}(t))_{i,j\in \Lambda_2}.
\end{equation}
Furthermore, it is apparent from the definition that $P(0)=I$ is the identity matrix.
From this fact and \eqref{gbdp:def}, one has 
\begin{align}
\begin{split}
&P_{ij}(t+\Delta_t)=\sum_{k\in \Lambda_2}P_{ik}(t)P_{kj}(\Delta_t)\\
&=P_{i,j-3}(t)P_{j-3,j}(\Delta_t)+ P_{i,j-2}(t)P_{j-2,j}(\Delta_t)+P_{i,j-1}(t)P_{j-1,j}(\Delta_t)\\
&+P_{i,j+1}(t)P_{j+1,j}(\Delta_t)+P_{i,j+2}(t)P_{j+2,j}(\Delta_t)+P_{i,j+3}(t)P_{j+3,j}(\Delta_t)\\
&+P_{i,j}(t)P_{j,j}(\Delta_t)+o(\Delta_t),
\end{split}
\end{align}
from where we obtain the following (forward) Chapman-Kolmogorov differential equation:
\begin{align}\label{gbdp:diff}
\begin{split}
\frac{dP_{ij}(t)}{dt}&=\alpha_{j-3}P_{i,j-3}(t)+\beta_{j-2}P_{i,j-2}(t)+\gamma_{j-1}P_{i,j-1}(t)\\
&-(\alpha_j+\beta_j+\gamma_j+\delta_j+\epsilon_j+\zeta_j)P_{i,j}(t)\\
&+\delta_{j+1}P_{i,j+1}(t)+\epsilon_{j+2}P_{i,j+2}(t)+\zeta_{j+3}P_{i,j+3}(t).
\end{split}
\end{align}
The equation \eqref{gbdp:diff} can be cast in the following matrix form:
\begin{equation}
\frac{dP(t)}{dt}=P(t)A,
\end{equation}
where $A \in \mathbb{R}^{(N+2)\times (N+2)}$ is the seven-diagonal (except for last column and row) matrix:
\begin{equation}\label{gbdp:matrix}
A={\scriptsize
\begin{pmatrix}
-S_0 & \gamma_0 & \beta_0 & \alpha_0 & & &  & & & \\
\delta_1 & -S_1 & \gamma _1 & \beta_1 & \alpha _1 & & & & &  \\
\epsilon_2 & \delta_1 & -S_2 & \gamma_2 & \beta_2 & \alpha_2 & & & &  \\
\zeta_3 & \epsilon_3 & \delta_3 & -S_3 & \gamma_3 & \beta_3 & \alpha_3 & & & \\
        & \ddots & \ddots & \ddots & \ddots & \ddots & \ddots & \ddots & & \\
        &        & \zeta_{N-3} & \epsilon_{N-3} & \delta_{N-3} & -S_{N-3} & \delta_{N-3} & \beta_{N-3} & \alpha _{N-3}\\
        &        &     & \zeta_{N-2} & \epsilon_{N-2} & \delta_{N-2} & -S_{N-2} & \gamma_{N-2} & \beta_{N-2} &  \\
        &  & & & \zeta_{N-1} & \epsilon_{N-1} & \delta_{N-1} & -S_{N-1} & \gamma_{N-1} &  \\
        &  & & & & \zeta_N & \epsilon_N & \delta_N & -S_N & \alpha_N \\
        &  & & & &         & & & \zeta_{N+3} & -S_{N+3}        
\end{pmatrix}}
\end{equation} 
with $S_n=\alpha_n+\beta_n+\gamma_n+\delta_n+\epsilon_n+\zeta_n$.
Comparing \eqref{ekraw:rec2} and \eqref{gbdp:matrix}, this generalized birth and death process can be exactly solved under appropriate parametrization.
\begin{theorem}\label{thm:bdp}
Set the parameters $\alpha_n,\beta_n,\gamma_n,\delta_n,\epsilon_n,\zeta_n$ as \eqref{ekraw:rec2-par}, which are all non-negative for $0<p\le \frac{1}{2}$. Then the probability $P(t)=(P_{ij}(t))_{i,j\in \Lambda_2}$ is explicitly given in terms of the $X_2$-Krawtchouk polynomials $\{ \hat{K}_n^{(2)}(x;p)\}_{n\in \Lambda_2}$:
\begin{equation}\label{gbdp:sol}
P_{ij}(t)=\frac{1}{\hat{h}_j}\sum_{x=-1}^N \hat{w}_x  \hat{K}_i^{(2)}(x;p) \hat{K}_j^{(2)}(x;p)e^{\lambda_xt},
\end{equation}  
where
\begin{align}
\begin{split}
\lambda_x&=-(K_3^{-N-1}(x-N;p)-K_3^{-N-1}(-1-N;p)),\\
\hat{h}_j&=\frac{(-1)^jj!}{(-N)_j}\left( \frac{q}{p}\right)^j(N+1)(N+3-n),\\
\hat{w}_x&= \binom{N+1}{x+1}\frac{p^{x+1}q^{N-x}}{K_2^{-N-2}(x-N-1;p)K_2^{-N-2}(x-N;p)}.
\end{split}
\end{align}
\end{theorem}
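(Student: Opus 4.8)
The plan is to read the linear system \eqref{gbdp:diff} as the matrix ODE $\dot P(t) = P(t)A$ with $P(0) = I$, whose unique solution is $P(t) = e^{tA}$, and then to diagonalise $A$ explicitly with the $X_2$-Krawtchouk polynomials as eigenvectors -- this being the generalized birth-and-death analogue of the Karlin--McGregor representation. Assemble the $(N+2)\times(N+2)$ matrix $\mathcal K = \bigl(\hat K_i^{(2)}(x;p)\bigr)_{i\in\Lambda_2,\,x\in X_N}$, whose column indexed by $x\in X_N$ is the vector $K(x) = \bigl(\hat K_i^{(2)}(x;p)\bigr)_{i\in\Lambda_2}$, and set $\Lambda = \operatorname{diag}(\lambda_x)_{x\in X_N}$. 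The first key step is to note that the seven-term recurrence \eqref{ekraw:rec2}--\eqref{ekraw:rec2-par}, read row by row against \eqref{gbdp:matrix}, says exactly that $A K(x) = \lambda_x K(x)$ for every $x\in X_N$, i.e.\ $A\mathcal K = \mathcal K\Lambda$.

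The second step is the invertibility of $\mathcal K$. Writing $W = \operatorname{diag}(\hat w_x)_{x\in X_N}$ and $H = \operatorname{diag}(\hat h_i)_{i\in\Lambda_2}$, the orthogonality relation \eqref{ekraw:orthogonality} is precisely the matrix identity $\mathcal K\,W\,\mathcal K^{\mathsf T} = H$; since $H$ is invertible, so is $\mathcal K$, with $\mathcal K^{-1} = W\,\mathcal K^{\mathsf T}H^{-1}$. Combining the two steps, $A = \mathcal K\Lambda\mathcal K^{-1}$, hence $P(t) = e^{tA} = \mathcal K\,e^{t\Lambda}\,\mathcal K^{-1} = \mathcal K\,e^{t\Lambda}\,W\,\mathcal K^{\mathsf T}H^{-1}$, and extracting the $(i,j)$ entry gives \eqref{gbdp:sol} verbatim. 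As built-in consistency checks, $P(0) = I$ is again just \eqref{ekraw:orthogonality}, and $\lambda_{-1} = 0$ follows at once from the stated formula for $\lambda_x$, so that $x = -1$ contributes the stationary part; that $P(t)$ is genuinely stochastic with non-negative entries then follows from the next step together with the zero row sums of $A$ and the standard theory of such generators.

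It remains to establish the non-negativity claim. For $n = N+3$ every coefficient in \eqref{ekraw:rec2-par} carries a vanishing factor except $\zeta_{N+3} = q^3/p^3 > 0$, so one may assume $0\le n\le N$. In that range the Pochhammer factors $(N-n-2)_2$, $(N-n-1)_2$, $(n-1)_2$, $(n-2)_3$ appearing in \eqref{ekraw:rec2-par} are all $\ge 0$, so $\alpha_n$ and $\zeta_n$ are non-negative outright, while $\beta_n$ and $\epsilon_n$ are non-negative precisely because $q - p \ge 0$ when $p\le\tfrac12$ -- this is where the hypothesis is used. For $\gamma_n$ and $\delta_n$ one invokes the elementary bound $0 < pq \le \tfrac14$ to check that the bracketed quantities $N-n+1-(4N-5n+2)pq$ and $N-n+2-(4N-5n+7)pq$ remain positive for $0\le n\le N$ (treating the cases $4N-5n+7\ge 0$ and $4N-5n+7<0$ separately, and likewise for the first bracket), so $\gamma_n$ and $\delta_n$ are in fact non-negative for all $p\in(0,1)$.

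I expect the main obstacle to be the boundary bookkeeping hidden in the first step. The recurrence \eqref{ekraw:rec2} is a relation among the symbols $\hat K_{n+k}^{(2)}$ with $k = -3,\dots,3$, but because the index set $\Lambda_2$ has the gap $N\mapsto N+3$, for $n\in\{N-2,N-1,N\}$ it formally refers to the absent $\hat K_{N+1}^{(2)},\hat K_{N+2}^{(2)}$ and for $n = N+3$ to $\hat K_{N+1}^{(2)},\dots,\hat K_{N+6}^{(2)}$. One must verify directly from \eqref{ekraw:rec2-par} that exactly the coefficients multiplying those non-existent polynomials vanish -- each carries a factor such as $(N-n-2)_2$, $(N-n-1)_2$ or $N-n+3$ that is zero at the relevant $n$ -- so that the recurrence collapses precisely onto the truncated last rows and columns of $A$ in \eqref{gbdp:matrix}, with $S_{N+3} = \zeta_{N+3}$ in particular. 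Once that compatibility is secured, the rest is routine linear algebra.
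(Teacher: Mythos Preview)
Your proof is correct and is essentially the paper's own argument recast in matrix language: where the paper introduces the auxiliary function $r_i(x,t)=\sum_{j}P_{ij}(t)\hat K_j^{(2)}(x;p)$, shows $\partial_t r_i=\lambda_x r_i$, and then inverts via orthogonality, you are writing the identical steps as $A\mathcal K=\mathcal K\Lambda$ and $\mathcal K^{-1}=W\mathcal K^{\mathsf T}H^{-1}$. Your treatment of the boundary bookkeeping at the gap $N\mapsto N+3$ and of the non-negativity (in particular the observation that $\gamma_n,\delta_n\ge 0$ for all $p\in(0,1)$, with $p\le\frac12$ needed only for $\beta_n,\epsilon_n$) is more explicit than the paper's, but the underlying strategy is the same.
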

\begin{proof}
The proof can be done following an approach similar to the one described in \cite{grunbaum2013birth}. 
The non-negativity of the parameters is easily checked since $q-p=1-2p\ge 0, N-n+1-(4N-5n+2)pq\ge \frac{n+2}{4},N-n+1-(4N-5n+2)pq,N-n+2-(4N-5n+7)pq\ge \frac{n+1}{4}$ for $0<p\le \frac{1}{2}$.
Let us introduce the function
\begin{equation}\label{ri}
r_{i}(x,t)=\sum_{j\in \Lambda_2}P_{ij}(t)\hat{K}_j^{(2)}(x;p),\quad i\in \Lambda_2.
\end{equation}
Differentiating both sides of \eqref{ri} with respect to $t$, we obtain from \eqref{ekraw:rec2}
\begin{align}
\begin{split}
\frac{\partial r_i(x,t)}{\partial t}=&\sum_{j\in \Lambda_2}\hat{K}_j^{(2)}(x;p)\frac{dP_{ij}(t)}{dt}\\
=&\sum_{j\in \Lambda_2}\hat{K}_j^{(2)}(x;p)\left\{ \alpha_{j-3}P_{i,j-3}(t)\right.\\
&+\beta_{j-2}P_{i,j-2}(t)+\gamma_{j-1}P_{i,j-1}(t)\\
&-(\alpha_j+\beta_j+\gamma_j+\delta_j+\epsilon_j+\zeta_j)P_{i,j}(t)\\
&\left.+\delta_{j+1}P_{i,j+1}(t)+\epsilon_{j+2}P_{i,j+2}(t)+\zeta_{j+3}P_{i,j+3}(t)\right\}\\
=&\sum_{j\in \Lambda_2}P_{ij}(t)\left\{ \alpha_j \hat{K}_{j+3}^{(2)}(x;p)+\beta_j \hat{K}_{j+2}^{(2)}(x;p)+\gamma_{j}\hat{K}_{j+1}^{(2)}(x;p)\right.\\
&-(\alpha_j+\beta_j+\gamma_j+\delta_j+\epsilon_j+\zeta_j)\hat{K}_{j}^{(2)}(x;p)\\
&\left.+\delta_{j}\hat{K}_{j-1}^{(2)}(x;p)+\epsilon_j\hat{K}_{j-2}^{(2)}(x;p)+\zeta_j\hat{K}_{j-3}^{(2)}(x;p)\right\}\\
=&\sum_{j\in \Lambda_2}P_{ij}(t)\lambda_x \hat{K}_{j}^{(2)}(x;p)\\
=&\lambda_x r_{i}(x,t).
\end{split}
\end{align} 
From the initial condition $r_i(x,0)=\hat{K}_{i}^{(2)}(x;p)$, one thus finds 
\begin{equation}
r_i(x,t)=e^{\lambda_x t}\hat{K}_{i}^{(2)}(x;p).
\end{equation}
Therefore we have
\begin{align}
\begin{split}
&\sum_{x=-1}^N \hat{w}_x \hat{K}_{i}^{(2)}(x;p)\hat{K}_{j}^{(2)}(x;p)e^{\lambda_xt}=\sum_{x=-1}^N r_i(x,t)\hat{K}_{j}^{(2)}(x;p) \hat{w}_x\\
&=\sum_{x=-1}^N \hat{w}_x  \left( \sum_{k\in \Lambda_2} P_{ik}(t)\hat{K}_{k}^{(2)}(x;p) \right) \hat{K}_{j}^{(2)}(x;p)\\
&=\sum_{k\in \Lambda_2} P_{ik}(t)\sum_{x=-1}^N \hat{w}_x  \hat{K}_{k}^{(2)}(x;p) \hat{K}_{j}^{(2)}(x;p)\\
&=\sum_{k\in \Lambda_2}P_{ik}(t)\hat{h}_{j}\delta_{kj}=P_{ij}(t)\hat{h}_j.
\end{split}
\end{align}
This completes the proof.
\end{proof}
Using Theorem \ref{thm:bdp}, it is not difficult to find the stationary distribution of the model (the limit $t\to \infty $) and to see that it does not depend on the initial state.
\begin{corollary}
For the probability \eqref{gbdp:sol}, the following limit holds:
\begin{equation}
\lim_{t\to \infty }P_{ij}(t)=r_j= \binom{N+3}{j} \frac{(N-j+1)_2p^{j-2}q^{N-j+3}}{(N+2)_2K_2^{-N-2}(-N-1;p)}.
\end{equation}
\end{corollary}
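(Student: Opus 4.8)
The plan is to read off the $t\to\infty$ behaviour directly from the spectral expansion \eqref{gbdp:sol} and then to evaluate the surviving data at the leftmost grid point $x=-1$. First I would note that $\lambda_{-1}=-(K_3^{-N-1}(-1-N;p)-K_3^{-N-1}(-1-N;p))=0$ and that this is the unique largest eigenvalue: the generator $A$ in \eqref{gbdp:matrix} has zero row sums (so $0$ is an eigenvalue, with constant right eigenvector), and the irreducibility of the chain evident from Fig.~\ref{gbdp} makes $0$ simple with every other eigenvalue strictly negative; since \eqref{gbdp:sol} displays the spectrum of $A$ as $\{\lambda_x\}_{x=-1}^{N}$, this forces $\lambda_x<0$ for $x=0,1,\dots,N$. (Equivalently, $\lambda_x\le0$ is immediate from $0\le P_{jj}(t)\le 1$ and $\hat w_x>0$, and strictness can be checked directly on the cubic $y\mapsto K_3^{-N-1}(y;p)$, which on the grid $y\in\{-1-N,\dots,0\}$ attains its minimum at the left endpoint.) Hence in the limit only the $x=-1$ term of \eqref{gbdp:sol} persists:
\begin{equation*}
\lim_{t\to\infty}P_{ij}(t)=\frac{\hat w_{-1}}{\hat h_j}\,\hat K_i^{(2)}(-1;p)\,\hat K_j^{(2)}(-1;p).
\end{equation*}

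Next I would carry out the boundary evaluation. In the definition \eqref{ekraw:def} the second summand of $\mathcal{F}_{N}$ carries the factor $1+x$, which vanishes at $x=-1$, so $\hat K_n^{(2)}(-1;p)=(N+1)\,f_2^p(-1)\,K_n^N(0;p)$ for $n\ne N+3$ (and the same after the limit $M\to N$ when $n=N+3$). Since $K_n^N(0;p)={}_2F_1(-n,0;-N;1/p)=1$ and, by the reduction ${}_2F_1(a,b;b;z)=(1-z)^{-a}$ applied to $f_2^p(-1)=K_2^{-N-2}(-N-2;p)={}_2F_1(-2,N+2;N+2;1/p)$, one has $f_2^p(-1)=(1-1/p)^2=q^2/p^2$, this collapses to $\hat K_n^{(2)}(-1;p)=(N+1)\,q^2/p^2$ for \emph{every} $n\in\Lambda_2$. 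I would stress that this value is independent of $n$: that is exactly why the limit does not depend on the initial state $i$. In the same way, using $f_2^p(0)=K_2^{-N-2}(-N-1;p)$, one gets $\hat w_{-1}=q^{N+1}/\big(f_2^p(-1)\,f_2^p(0)\big)=p^2 q^{N-1}/K_2^{-N-2}(-N-1;p)$.

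Finally I would substitute these values, together with the expression for $\hat h_j$ from Theorem~\ref{thm:bdp}, into the displayed limit; simplifying with $(-N)_j/\big((-1)^j j!\big)=\binom{N}{j}$ yields $r_j=(N+1)\binom{N}{j}\,p^{j-2}q^{N-j+3}/\big((N+3-j)\,K_2^{-N-2}(-N-1;p)\big)$, and the elementary identity $\binom{N+3}{j}(N-j+1)_2/(N+2)_2=(N+1)\binom{N}{j}/(N+3-j)$ — both sides equal $(N+1)!/\big(j!\,(N-j)!\,(N+3-j)\big)$ — brings this to the stated form, the value at $j=N+3$ being covered by the same $M\to N$ limit that defines $\hat K_{N+3}^{(2)}$ and $\hat h_{N+3}$. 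The one step I expect to carry genuine content is the spectral argument of the first paragraph, i.e.\ confirming that $\lambda_{-1}=0$ is the strictly dominant eigenvalue of the generator; everything afterwards is bookkeeping of hypergeometric values at the boundary node, and as a consistency check one should recover $\sum_{j\in\Lambda_2}r_j=1$.
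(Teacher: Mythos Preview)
Your proposal is correct and is precisely the approach the paper has in mind: the paper gives no detailed proof, only the sentence ``Using Theorem~\ref{thm:bdp}, it is not difficult to find the stationary distribution of the model (the limit $t\to\infty$) and to see that it does not depend on the initial state,'' and your argument --- identifying $\lambda_{-1}=0$ as the unique dominant eigenvalue in the spectral expansion \eqref{gbdp:sol} and then evaluating $\hat K_n^{(2)}(-1;p)$, $\hat w_{-1}$, $\hat h_j$ at the boundary node --- fills in exactly those details. Your observation that $\hat K_n^{(2)}(-1;p)=(N+1)q^2/p^2$ is independent of $n$ is the key point behind the paper's remark that the limit does not depend on the initial state~$i$.
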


\section{Quantum walks associated with exceptional Krawtchouk polynomials}
In the previous section, we have introduced a generalized birth and death process that can be viewed as a classical random walk.
We now consider the quantum analogue of this walk described by the following Schr\"{o}dinger equation:
\begin{equation}\label{schrodinger}
i\hbar\frac{d}{dt}\left| \psi (t)\right> = M \left| \psi (t)\right>,
\end{equation}
where $M \in \mathbb{R}^{(N+2)\times (N+2)}$ is a weighted adjacency matrix associated to the $X_2$-Krawtchouk polynomials and $\left| \psi (t)\right> \in \mathbb{C}^{N+2}$ is a state vector.
The time variable is scaled so that $\hbar =1$.
Eq. \ref{schrodinger} implies the following evolution of the initial vector $\left| \psi (0)\right>$:
\begin{equation}
\left| \psi (t)\right> = \exp(-itM)\left| \psi (0)\right>.
\end{equation}
It is not difficult to see that $M$ must be symmetric so that $|\left < \psi (t)|\psi (t)\right>| =1$ for any time $t$.
The orthonormal $X_2$-Krawthcouk polynomials $\{ \bar{K}_n(x;p)\} _{n\in \Lambda_2 }$ thus come into play.
\begin{proposition}
For the orthonormal $X_2$-Krawtchouk polynomials
\begin{equation}
\bar{K}_n(x;p)=\frac{1}{\sqrt{h_n}}\hat{K}_n^{(2)}(x;p),
\end{equation}
the recurrence relation is found to be from \eqref{ekraw:rec}
\begin{align}\label{ekraw:rec-orthonormal}
\begin{split}
\bar{\lambda}_x\bar{K}(x;p)&=I_{n+3}\bar{K}_{n+3}(x;p)+J_{n+2}\bar{K}_{n+2}(x;p)+L_{n+1}\bar{K}_{n+1}(x;p)\\
 &+L_{n}\bar{K}_{n-1}(x;p)+J_{n}\bar{K}_{n-2}(x;p)+I_{n}\bar{K}_{n-3}(x;p)\\
 &-\bar{S}_{n}\bar{K}_n(x;p),
 \end{split}
\end{align}
where 
\begin{align}
\begin{split}
\bar{\lambda}_x&=-K^{-N-1}_3(x-N;p),\\
I_n&=\sqrt{\alpha_{n-3}\zeta_n}=\frac{\sqrt{(N-n+6)(N-n+1)_2(n-2)_3}}{(N+1)_3}\cdot \left( \frac{q}{p}\right)^{\frac{3}{2}},\\
J_n&=\sqrt{\beta_{n-2}\epsilon_n}=\frac{3\sqrt{(N-n+5)(N-n+1)_3(n-1)_2}}{(N+1)_3}\cdot \frac{q(q-p)}{p^2},\\
L_n&=\sqrt{\gamma_{n-1}\delta_n}\\
&=\frac{3\sqrt{n(N-n+1)(N-n+3)_2}}{p^2(N+1)_3}\cdot \left( \frac{q}{p}\right)^{\frac{1}{2}}\left\{  N-n+2-(4N-5n+7)pq\right\},\\
\bar{S}_n&=S_n+K^{-N-1}_3(-1-N;p).
\end{split}
\end{align}
\end{proposition}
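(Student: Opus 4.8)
The plan is to derive \eqref{ekraw:rec-orthonormal} directly from the explicit seven-term recurrence \eqref{ekraw:rec2} by the standard symmetrization that turns a recurrence for an unnormalized orthogonal family into a self-adjoint one for the orthonormal family. Write $\hat{K}^{(2)}_n(x;p)=\sqrt{\hat{h}_n}\,\bar{K}_n(x;p)$, where $\hat{h}_n$ is the squared norm of \eqref{ekraw:orthogonality} with $\ell=2$ (so that $\hat{h}_n=\tfrac{(-1)^nn!}{(-N)_n}(q/p)^n(N+1)(N-n+3)$), and substitute this into \eqref{ekraw:rec2}. After dividing through by $\sqrt{\hat{h}_n}$, the coefficient multiplying $\bar{K}_{n+k}$ on the right-hand side is the old coefficient times $\sqrt{\hat{h}_{n+k}/\hat{h}_n}$; in particular $\bar{K}_{n+3}$ carries $\alpha_n\sqrt{\hat{h}_{n+3}/\hat{h}_n}$ and $\bar{K}_{n-3}$ carries $\zeta_n\sqrt{\hat{h}_{n-3}/\hat{h}_n}$, and similarly for the one- and two-step terms. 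Finally, splitting $\lambda_x=\bar{\lambda}_x+K_3^{-N-1}(-1-N;p)$ and moving the constant onto the diagonal accounts for the replacements $\lambda_x\mapsto\bar{\lambda}_x$ and $-S_n\mapsto-\bar{S}_n$ with $\bar{S}_n=S_n+K_3^{-N-1}(-1-N;p)$.

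The substance of the argument is then the trio of ``detailed balance'' identities
\[
\gamma_n\,\hat{h}_{n+1}=\delta_{n+1}\,\hat{h}_n,\qquad
\beta_n\,\hat{h}_{n+2}=\epsilon_{n+2}\,\hat{h}_n,\qquad
\alpha_n\,\hat{h}_{n+3}=\zeta_{n+3}\,\hat{h}_n .
\]
Granting these, the off-diagonal couplings symmetrize into geometric means: the coefficient of $\bar{K}_{n+1}$ in the equation indexed by $n$ is $\gamma_n\sqrt{\hat{h}_{n+1}/\hat{h}_n}=\sqrt{\gamma_n\cdot(\gamma_n\hat{h}_{n+1}/\hat{h}_n)}=\sqrt{\gamma_n\delta_{n+1}}=L_{n+1}$, and that of $\bar{K}_{n-1}$ is $\delta_n\sqrt{\hat{h}_{n-1}/\hat{h}_n}=\sqrt{\gamma_{n-1}\delta_n}=L_n$; the same manipulation gives $\sqrt{\beta_{n-2}\epsilon_n}=J_n$ and $\sqrt{\alpha_{n-3}\zeta_n}=I_n$, matching the explicit formulas in the statement once the shifted Pochhammer symbols are expanded. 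Each identity is a one-line check from $\hat{h}_{n+1}/\hat{h}_n=\frac{(n+1)(q/p)(N-n+2)}{(N-n)(N-n+3)}$ (and the analogous ratios for steps $2$ and $3$): inserting the parameters \eqref{ekraw:rec2-par} and cancelling the common $(N+1)_3$ reduces, say, the first identity to the elementary observation that the bracket $N-n+1-(4N-5n+2)pq$ in $\gamma_n$ becomes exactly the bracket $N-n+2-(4N-5n+7)pq$ in $\delta_n$ under $n\mapsto n-1$.

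Conceptually the identities need no verification at all: $\bar{\lambda}_x=-K_3^{-N-1}(x-N;p)$ is minus a degree-$3$ polynomial in $x$, so multiplication by it is self-adjoint on the $(N+2)$-dimensional space of grid functions on $X_N$ equipped with $\hat{\mathcal{L}}$; hence its matrix $\bigl(\hat{\mathcal{L}}[\bar{\lambda}_x\bar{K}_m\bar{K}_n]\bigr)_{m,n\in\Lambda_2}$ in the orthonormal basis $\{\bar{K}_n\}$ is symmetric, and by \eqref{ekraw:rec} it is seven-diagonal, so one just reads off the band entries. Either way, the one point that is not pure bookkeeping is the gap in $\Lambda_2=\{0,\dots,N,N+3\}$: one must confirm that the ``forbidden'' couplings $\alpha_{N-2},\alpha_{N-1},\beta_{N-1},\beta_N,\gamma_N$ (and, symmetrically, $\delta_{N+3},\epsilon_{N+3}$ together with $\delta_0,\epsilon_0,\epsilon_1,\zeta_0,\zeta_1,\zeta_2$ at the lower end) all vanish — which they do because of the factors $(N-n-2)_2$, $(N-n-1)_2$, $(N-n)$, $(N-n+3)$, $n$, $(n-1)_2$, $(n-2)_3$ in \eqref{ekraw:rec2-par} — so that $\hat{K}^{(2)}_{N+1}$ and $\hat{K}^{(2)}_{N+2}$ never enter and the recurrence genuinely closes on $\{\hat{K}^{(2)}_n\}_{n\in\Lambda_2}$. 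This boundary check is the only mildly delicate step, and it is short.
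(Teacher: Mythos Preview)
Your argument is correct and is exactly the standard symmetrization that the paper leaves implicit: the proposition is stated in the paper without proof, merely as a consequence ``from \eqref{ekraw:rec}'' (really \eqref{ekraw:rec2}), and your substitution $\hat{K}^{(2)}_n=\sqrt{\hat{h}_n}\,\bar{K}_n$ together with the self-adjointness of multiplication by $\bar{\lambda}_x$ is precisely the intended derivation. Your additional verification that the boundary coefficients vanish so that the recurrence closes on $\Lambda_2$ is a welcome detail the paper does not spell out.
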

Introduce the sequence of functions $\{ T_n(x;p)\}_{n=1}^{N+2}$ as follows
\begin{align}
\begin{split}
T_n(x;p)=
\begin{cases}
\sqrt{\frac{w_x}{\hat{h}_{n-1}}}\hat{K}_{n-1}^{(2)}(x;p)\quad &n=1,2,\ldots ,N+1\\
\sqrt{\frac{w_x}{\hat{h}_{N+3}}}\hat{K}_{N+3}^{(2)}(x;p)\quad &n=N+2
\end{cases}
\end{split}
\end{align}
It is straightforward to see from the orthogonality relation \eqref{ekraw:orthogonality} that
\begin{equation}\label{ekraw:orthonormality}
\sum_{x=-1}^N T_n(x;p)T_m(x;p)=\delta_{mn}.
\end{equation} 
Furthermore, from \eqref{ekraw:rec-orthonormal}, the vector
\begin{equation}
\bm{v}_x =\left( T_1(x;p),T_1(x;p),\ldots ,T_{N+2}(x;p)\right)^T
\end{equation}
is a normalized eigenvector of the following eigenvalue problems:
\begin{equation}\label{ekraw:evp}
M\bm{v}_x=\lambda_x \bm{v}_x,\quad x=-1,0,\ldots ,N
\end{equation}
with
\begin{equation}
M={\scriptsize
\begin{pmatrix}
-\bar{S}_0 & L_1 & J_2 & I_3 & & &  & & & \\
L_1 & -\bar{S}_1 & L_2 & J_3 & I_4 & & & & &  \\
J_2 & L_2 & -\bar{S}_2 & L_3 & J_4 & I_5 & & & &  \\
I_3 & J_3 & L_3 & -\bar{S}_3 & L_4 & J_5 & I_6 & & & \\
        & \ddots & \ddots & \ddots & \ddots & \ddots & \ddots & \ddots & & \\
        &        & I_{N-3} & J_{N-3} & L_{N-3} & -\bar{S}_{N-3} & L_{N-2} & J_{N-1} & I _{N}\\
        &        &     & I_{N-2} & J_{N-2} & L_{N-2} & -\bar{S}_{N-2} & L_{N-1} & J_{N} &  \\
        &  & & & I_{N-1} & J_{N-1} & L_{N-1} & -S_{N-1} & L_{N} &  \\
        &  & & & & I_N & J_N & L_N & -\bar{S}_N & I_{N+3} \\
        &  & & & &         & & & I_{N+3} & -\bar{S}_{N+3}        
\end{pmatrix}}.
\end{equation}
Using the $X_2$-Krawtchouk polynomials, we can calculate exactly the transition amplitude for a particle initially at site $i$ to be found at site $j$ at time $t$:
\begin{equation}\label{qrw:transiton-amplitude}
c_{ij}(t)= \left( e_j|\exp(-iMt)|e_i\right),\quad i,j=0,1,\cdots ,N+1,
\end{equation}
where $\left| e_j\right) =(0,0,\ldots ,0,1,0,\ldots ,0)^T$ is a unit vector with $1$ at $j$-th entry and zero elsewhere.
\begin{theorem}
The transition amplitude \eqref{qrw:transiton-amplitude} is explictly given by
\begin{align}
c_{ij}(t)=\sum_{x=-1}^N T_i(x;p)T_j(x;p)e^{-i\bar{\lambda}_x t}.
\end{align}
\end{theorem}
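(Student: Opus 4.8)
The plan is to diagonalize $M$ explicitly through its eigenvectors $\bm{v}_x$ and then read off the matrix elements of $\exp(-iMt)$, exactly as one does for the Karlin--McGregor formula in the classical case.

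First I would record the structural facts already in hand. The matrix $M$ is real symmetric of size $(N+2)\times(N+2)$, and the spectral parameter $x$ runs over the $N+2$ grid points of $X_N=\{-1,0,1,\ldots,N\}$. By the eigenvalue relation \eqref{ekraw:evp} each $\bm{v}_x$ satisfies $M\bm{v}_x=\bar\lambda_x\bm{v}_x$ with $\bar\lambda_x=-K^{-N-1}_3(x-N;p)$, and by the orthonormality \eqref{ekraw:orthonormality} the family $\{\bm{v}_x\}_{x=-1}^N$ is orthonormal. Since there are exactly $N+2$ such vectors in an $(N+2)$-dimensional space, they form an orthonormal basis; equivalently the matrix $U$ whose columns are the $\bm{v}_x$ is orthogonal, $U^TU=UU^T=I$, and $M=UDU^T$ with $D=\mathrm{diag}(\bar\lambda_{-1},\bar\lambda_0,\ldots,\bar\lambda_N)$.

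Second, from $M=UDU^T$ one gets the spectral resolution of the evolution operator, $\exp(-iMt)=U\exp(-iDt)U^T=\sum_{x=-1}^N e^{-i\bar\lambda_x t}\,\bm{v}_x\bm{v}_x^T$. Taking the $(j,i)$ matrix element and using that the pairing of $\bm{v}_x$ with the unit vector $|e_n)$ returns the corresponding entry $T_{n+1}(x;p)$ of $\bm{v}_x$ (with the harmless relabelling that makes the physical sites $0,1,\ldots,N+1$ correspond to the slots $1,2,\ldots,N+2$, the site $N+3$ occupying the last slot), I obtain
\begin{equation}
c_{ij}(t)=(e_j|\exp(-iMt)|e_i)=\sum_{x=-1}^N e^{-i\bar\lambda_x t}\,(e_j|\bm{v}_x)(\bm{v}_x|e_i)=\sum_{x=-1}^N T_i(x;p)T_j(x;p)e^{-i\bar\lambda_x t},
\end{equation}
which is the asserted formula.

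Because every ingredient — the eigenvalue equation \eqref{ekraw:evp}, the orthonormality \eqref{ekraw:orthonormality}, and the count of grid points — is already established, there is no substantial obstacle; the work is purely the assembly above. The only points that need a little care are the bookkeeping of the index shift between the site labels $0,\ldots,N+1$ and the component labels $1,\ldots,N+2$ of $\bm{v}_x$, the special placement of the state $N+3$, and checking the sign/normalization convention in \eqref{ekraw:evp} so that the eigenvalue used there is indeed $\bar\lambda_x=-K^{-N-1}_3(x-N;p)$, matching the exponent in the final expression.
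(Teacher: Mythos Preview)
Your proposal is correct and follows essentially the same route as the paper: form the orthogonal matrix of eigenvectors from \eqref{ekraw:orthonormality}, diagonalize $M$ as $UDU^T$, and read off the matrix elements of $\exp(-iMt)$. The only difference is cosmetic—you write the spectral resolution as $\sum_x e^{-i\bar\lambda_x t}\bm{v}_x\bm{v}_x^{T}$ whereas the paper keeps it in the form $U\exp(-iDt)U^{T}$ before extracting entries.
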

\begin{proof}
From the relation \eqref{ekraw:orthonormality}, it is not difficult to see that
\[
VV^T=E,\quad V=\left( \bm{v}_{-1},\bm{v}_1,\ldots, \bm{v}_{N},\bm{v}_{N}\right).
\]
Therefore, one finds from \eqref{ekraw:evp} that
\[
M=VDV^T,\quad D=\mathrm{diag}(\bar{\lambda}_{-1},\bar{\lambda}_0,\ldots ,\bar{\lambda}_{N+1})
\]
to obtain 
\begin{align}
\begin{split}
c_{ij}(t)&=\left( e_j| \exp(-iMt)|e_i\right)\\
&=\left( e_j| \exp(-iVDV^T t)|e_i\right)\\
&=\left( e_j| V\exp(-iDt)V^T |e_i\right)\\
&=\sum_{x=-1}^N T_i(x;p)T_j(x;p)e^{-i\bar{\lambda}_x t}.
\end{split}
\end{align}
This completes the proof.
\end{proof}
It is worth noting that the spectrum $\lambda_x$ is rational when $p$ is rational, which implies that there exists some time $t_0$ such that $(\bar{\lambda}_x-\bar{\lambda}_y)t_0$ is divisible by $2\pi $.
Therefore, as is discussed in \cite{godsil2012state}, perfect return will take place.
\begin{corollary}\label{cor:pr}
When $0< p\le \frac{1}{2}$ is rational, there exists some time $t_0$ such that 
\begin{align}
|c_{ij}(t_0)|=\delta_{ij}.
\end{align} 
\end{corollary}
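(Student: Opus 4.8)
The plan is to leverage the explicit diagonalization $M = V D V^T$ established in the preceding theorem together with the rationality of the eigenvalues $\bar\lambda_x = -K_3^{-N-1}(x-N;p)$. Since $K_3^{-N-1}(x-N;p)$ is a polynomial expression in the rational data $p,q=1-p,N,x$, each eigenvalue $\bar\lambda_x$ is a rational number for $x = -1,0,\dots,N$ when $p$ is rational; writing these $N+2$ rationals over a common denominator $Q \in \mathbb{Z}_{>0}$, we have $\bar\lambda_x = m_x/Q$ with $m_x \in \mathbb{Z}$. First I would make this rationality statement precise by extracting the relevant coefficients from \eqref{ekraw:rec2-par} (equivalently from the ${}_2F_1$ form of the Krawtchouk polynomial) and noting that no denominator beyond those already visible can appear.

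Next I would choose $t_0 = 2\pi Q$. Then $(\bar\lambda_x - \bar\lambda_y)t_0 = 2\pi (m_x - m_y) \in 2\pi \mathbb{Z}$ for all $x,y$, so $e^{-i\bar\lambda_x t_0} = e^{-i\bar\lambda_{-1} t_0}\, e^{-i(\bar\lambda_x-\bar\lambda_{-1})t_0} = e^{-i\bar\lambda_{-1} t_0}$ is the same unimodular constant $\omega$ for every $x$. Plugging this into the formula $c_{ij}(t) = \sum_{x=-1}^N T_i(x;p)T_j(x;p)e^{-i\bar\lambda_x t}$ from the theorem gives
\begin{align}
c_{ij}(t_0) = \omega \sum_{x=-1}^N T_i(x;p)T_j(x;p) = \omega\, \delta_{ij},
\end{align}
where the last equality is exactly the orthonormality relation \eqref{ekraw:orthonormality}. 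Taking absolute values and using $|\omega| = 1$ yields $|c_{ij}(t_0)| = \delta_{ij}$, which is the claim. Equivalently, one can phrase this as $\exp(-iMt_0) = \omega\, E$, so the walk returns (up to a global phase) the initial state at time $t_0$.

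I do not expect a genuine obstacle here; the corollary is essentially a packaging of the spectral decomposition plus a common-denominator argument, mirroring the treatment in \cite{godsil2012state,god
sil2012state}. The only point requiring a small amount of care is the verification that $\bar\lambda_x$ is rational — i.e., that evaluating the Krawtchouk polynomial $K_3^{-N-1}$ at integer argument and rational $p$ produces no hidden irrationality. This is immediate from the hypergeometric representation \eqref{kraw:rec2-par} since all the $\alpha_n,\dots,\zeta_n$ and hence the entries of $M$, as well as $\lambda_x$, are rational functions of $p$ with integer/Pochhammer coefficients; one simply observes that $\bar\lambda_x \in \mathbb{Q}$ and clears denominators. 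I would state this explicitly as the first line of the proof and then carry out the two-line argument above.
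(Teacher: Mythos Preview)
Your proposal is correct and follows essentially the same argument as the paper: use the rationality of the eigenvalues $\bar\lambda_x$ to produce a time $t_0$ with $(\bar\lambda_x-\bar\lambda_y)t_0\in 2\pi\mathbb{Z}$ for all $x,y$, and then invoke the orthonormality relation \eqref{ekraw:orthonormality} to collapse the spectral sum for $c_{ij}(t_0)$ to a unimodular constant times $\delta_{ij}$. One minor slip worth cleaning up: the off-diagonal entries of $M$ are square roots such as $I_n=\sqrt{\alpha_{n-3}\zeta_n}$ and are typically irrational, so you should not claim they are rational; this is harmless, however, since the rationality you actually need is that of $\bar\lambda_x=-K_3^{-N-1}(x-N;p)$, which is immediate from its explicit polynomial form.
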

\begin{proof}
Since $p$ is rational, there exists some time $t_0$ such that $(\bar{\lambda}_x-\bar{\lambda}_y)t_0$ is divisible by $2\pi $, which amounts to 
\begin{equation}
e^{-i\bar{\lambda_x} t_0}= e^{-i\bar{\lambda_y} t_0}=e^{-i\bar{\lambda}_{-1}t_0},\quad x,y=-1,0,\ldots ,N.
\end{equation}
Therefore we have from the orthogonality \eqref{ekraw:orthonormality} 
\begin{align}
\begin{split}
c_{ij}&=\sum_{x=-1}^N T_i(x;p)T_j(x;p)e^{-i\bar{\lambda}_x t_0}\\
&=e^{-i\bar{\lambda}_{-1}t_0}\sum_{x=-1}^N T_i(x;p)T_j(x;p)=e^{-i\bar{\lambda}_{-1}t_0} \delta_{ij}.
\end{split}
\end{align}
This completes the proof.
\end{proof}
For example, when $N=5$ and $p=\frac{1}{4}$, the matrix $M$ is 
\begin{equation}
M=\begin{pmatrix}
\frac{92}{7} & \frac{15\sqrt{210}}{28} & \frac{3\sqrt{30}}{7} & \frac{3\sqrt{3}}{14} & 0 & 0 & 0\\
\frac{15\sqrt{210}}{28} & \frac{25}{2} & \frac{87\sqrt{7}}{28} & \frac{9\sqrt{70}}{28} & \frac{3\sqrt{21}}{28} & 0 & 0\\
\frac{3\sqrt{30}}{7} & \frac{87\sqrt{7}}{28} & \frac{74}{7} & \frac{9\sqrt{10}}{4} & \frac{9\sqrt{3}}{7} & \frac{3\sqrt{15}}{28} & 0 \\
\frac{3\sqrt{3}}{14} & \frac{9\sqrt{70}}{28} & \frac{9\sqrt{10}}{4} & \frac{55}{7} & \frac{27\sqrt{30}}{28} & \frac{15\sqrt{6}}{28} & 0\\
0 & \frac{3\sqrt{21}}{28} & \frac{9\sqrt{3}}{7} & \frac{3\sqrt{15}}{28} & \frac{34}{7} & \frac{39\sqrt{5}}{28} & 0\\
0 & 0 & \frac{3\sqrt{15}}{28} & \frac{15\sqrt{6}}{28} & \frac{39\sqrt{5}}{28} & \frac{29}{14} & \frac{3\sqrt{42}}{28}\\
0 & 0 & 0 & 0 &0 &\frac{3\sqrt{42}}{28} & 0
\end{pmatrix}
\end{equation}
and its eigenvalues are given by
\begin{equation}
27,\quad \frac{103}{7},\quad 7,\quad \frac{19}{7},\quad \frac{5}{7},\quad -\frac{1}{7},\quad -1. 
\end{equation}
Therefore, perfect return will take place at time $t_0=7\pi $.
This is depicted in Fig. \ref{qw_p=1/4} 

\begin{figure}[htbp]
\begin{center}
\begin{tabular}{cl}
\includegraphics[width=8cm]{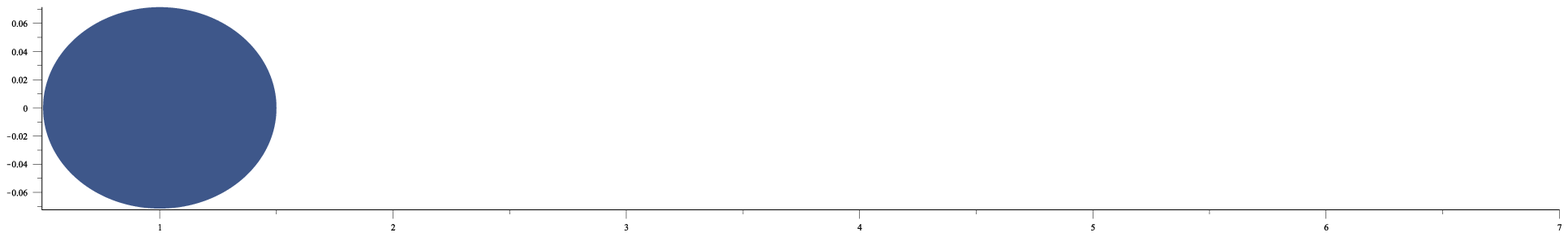}  & $(t=0)$\\
\includegraphics[width=8cm]{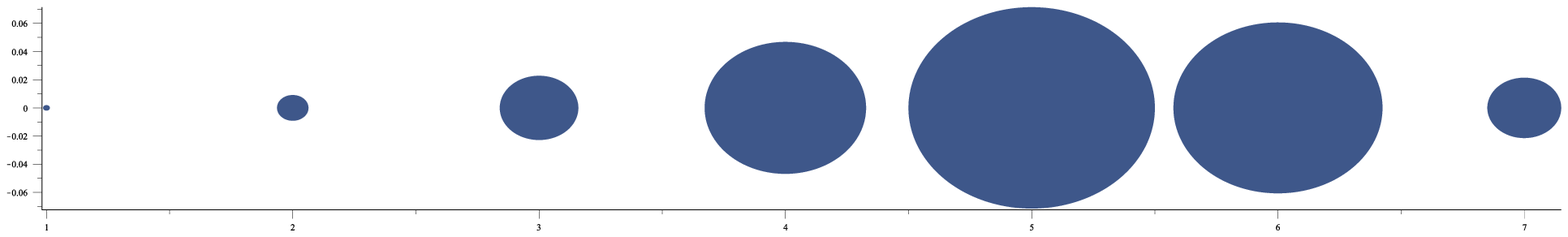} & $(t=\frac{7}{2}\pi)$\\
\includegraphics[width=8cm]{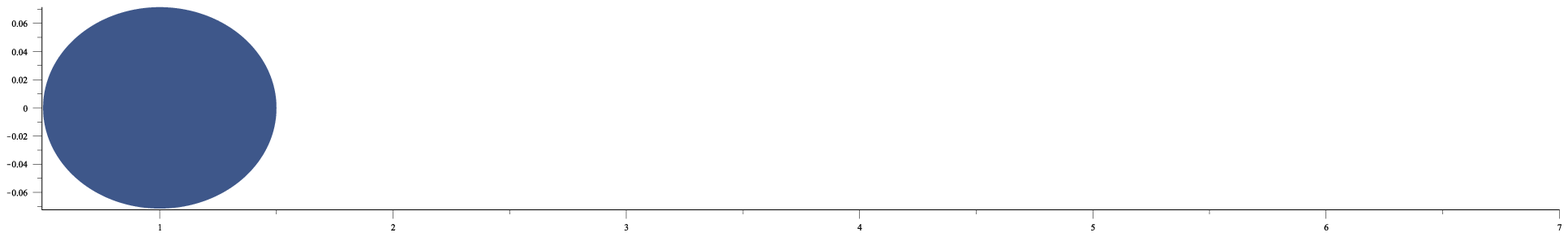} &  $(t=7\pi)$
\end{tabular}
\end{center}
  \caption{The plot of transition amplitude $\{ |c_{1i}(t)|\}_{i=0}^N$ with $N=5$ and $p=\frac{1}{4}$. The areas of the circles are proportional to $|c_{1j}(t)|$ at the given lattice point $j$. Perfect return from $1$ to $1$ occurs at $t=7\pi$.}
  \label{qw_p=1/4}
\end{figure}
In \cite{godsil2012state}, it is explained that perfect return will take place at $t=t_0$ if perfect state transfer takes place at time $t=\frac{t_0}{2}$. 
It should be noted however that perfect state transfer is not observed in this case even though perfect return occurs. 
It is shown in \cite{christandl2004perfect,vinet2012construct} that perfect state transfer takes place in the model associated to the ordinary Krawtchouk polynomials when the parameter $p$ is set to $p=\frac{1}{2}$. 
We shall similarly consider the case with $p=\frac{1}{2}$ for the model under study which is based on $X_2$-Krawtchouk polynomials.
When $N=5$ and $p=\frac{1}{2}$,  the matrix $M$ becomes
\begin{equation}
M=\begin{pmatrix}
0 & \frac{\sqrt{70}}{28} & 0 & \frac{1}{14} & 0 & 0 & 0\\
\frac{\sqrt{70}}{28} & 0 & \frac{3\sqrt{21}}{28} & 0 & \frac{\sqrt{7}}{28} & 0 & 0\\
0 & \frac{3\sqrt{21}}{28} & 0 & \frac{3\sqrt{30}}{28} & 0 & \frac{\sqrt{5}}{28} & 0 \\
\frac{1}{14} & 0 & \frac{3\sqrt{30}}{28} & 0 & \frac{5\sqrt{10}}{28} & 0 & 0\\
0 & \frac{\sqrt{7}}{28} & 0 & \frac{5\sqrt{10}}{28} & 0 & \frac{3\sqrt{15}}{28} & 0\\
0 & 0 & \frac{\sqrt{5}}{28} & 0 & \frac{3\sqrt{15}}{28} & 0 & \frac{\sqrt{14}}{28}\\
0 & 0 & 0 & 0 &0 & \frac{\sqrt{14}}{28}& 0
\end{pmatrix}
\end{equation}
and its eigenvalues are given by 
\begin{equation}
1,\quad \frac{3}{7},\quad \frac{1}{7},\quad 0,\quad -\frac{1}{7},\quad -\frac{3}{7},\quad -1.
\end{equation}
The plot of $\{ |c_{1j}(t)|\}_{j=1}^N$ is given in Fig. \ref{qw_p=1/2}.

\begin{figure}[htbp]
\begin{center}
\begin{tabular}{cl}
\includegraphics[width=8cm]{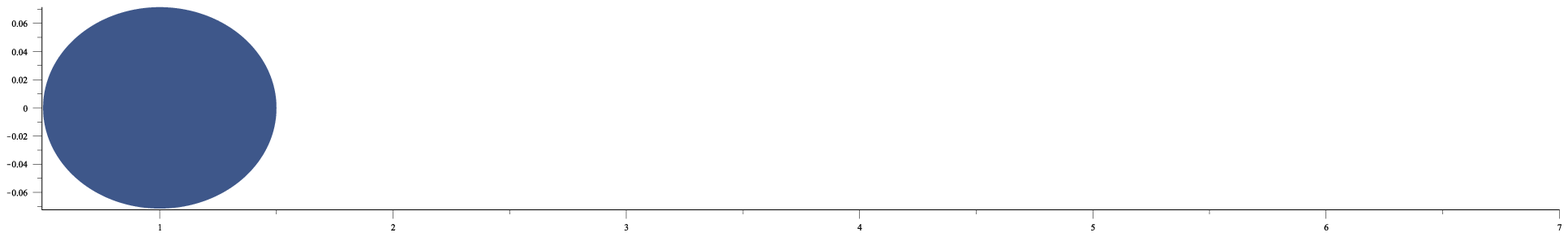}  & $(t=0)$\\
\includegraphics[width=8cm]{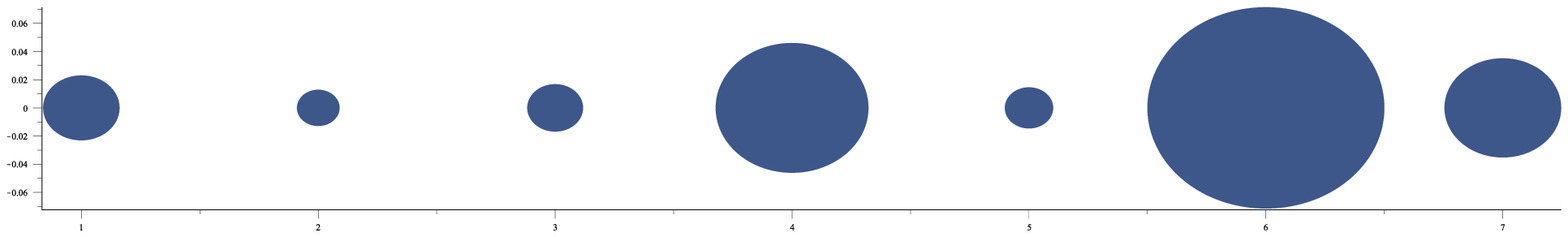} & $(t=\frac{7}{2}\pi)$\\
\includegraphics[width=8cm]{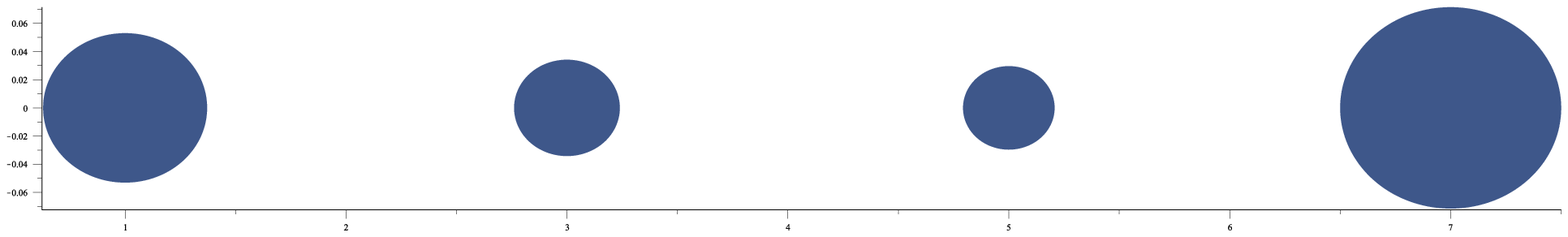} &  $(t=7\pi)$\\
\includegraphics[width=8cm]{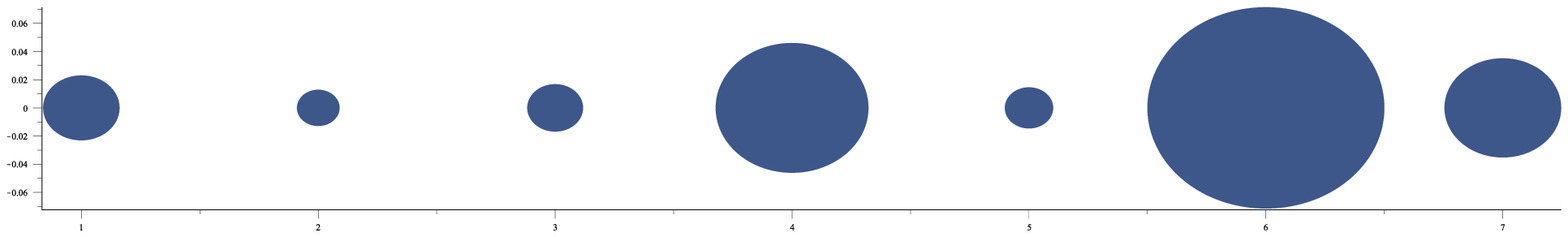} & $(t=\frac{21}{2}\pi)$\\
\includegraphics[width=8cm]{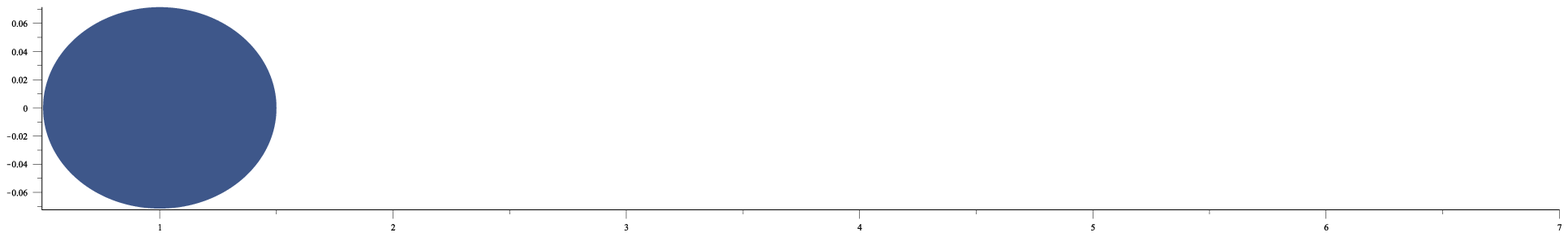} & $(t=14\pi)$
\end{tabular}
\end{center}
  \caption{The plot of transition amplitude $\{ |c_{1i}(t)|\}_{i=0}^N$ with $N=5$ and $p=\frac{1}{2}$. The areas of the circles are proportional to $|c_{1j}(t)|$ at the given lattice point $i$. Perfect return from $1$ to $1$ occurs at $t=14\pi$ and fractional revival from $1$ to odd sites is found at $t=7\pi$.}
  \label{qw_p=1/2}
\end{figure}
In this case, perfect state transfer is not observed either; interestingly however there is fractional revival, i.e. the site starting from $1$ is found only at odd sites at times $7\pi$ modulo $14\pi $.
This can be summarized in the following theorem.
\begin{theorem}\label{theorem:fr}
When $p=\frac{1}{2}$, perfect return takes place at some time $t_0$ such that
\begin{equation}
|c_{ij}(t_0)|=\delta_{ij}
\end{equation}
and 
\begin{equation}
\left| c_{ij}\left( \frac{t_0}{2}\right)\right|=0,\quad i+j:odd
\end{equation}
for $N \equiv 6 \mod 8$ and 
\begin{equation}
\left| c_{ij}\left( \frac{t_0}{2}\right)\right|=0,\quad i+j+N:even 
\end{equation}
for $N \not\equiv 6 \mod 8$.
\end{theorem}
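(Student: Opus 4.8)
The plan is to exploit, at $p=\tfrac12$, the reflection $x\mapsto x^{*}:=N-1-x$ of the spectral set $\{-1,0,\dots,N\}$, together with a $2$-adic analysis of the (rational) eigenvalues that decides whether the common phase at a time of perfect return is $+1$ or $-1$. The fractional-revival pattern then drops out by pairing $x$ with $x^{*}$ in the expansion $c_{ij}(t)=\sum_{x=-1}^{N}T_i(x;\tfrac12)T_j(x;\tfrac12)e^{-i\bar\lambda_x t}$ proved just above.

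\emph{Reflection symmetry.} Applying the $p\leftrightarrow q$ symmetry \eqref{kraw:sym} with $N\to-N-1$, $n=3$, together with ${}_2F_1(-3,b;b;z)=(1-z)^3$, one checks at $p=q=\tfrac12$ that $\bar\lambda_x=-\frac{u(u^{2}+3N+5)}{(N+1)(N+2)(N+3)}$ with $u=2x-N+1$; in particular $u_{x^{*}}=-u_x$, so $\bar\lambda_{x^{*}}=-\bar\lambda_x$, $\bar\lambda_{-1}=1=-\bar\lambda_N$, and every $\bar\lambda_x$ is rational. On the polynomial side, \eqref{ekraw:sym} at $p=q=\tfrac12$ yields $\hat K^{(2)}_m(x^{*};\tfrac12)=(-1)^{m}\hat K^{(2)}_m(x;\tfrac12)$ (also for $m=N+3$), while $\hat w_x$ is invariant under $x\mapsto x^{*}$. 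Writing $d_i$ for the degree of the $X_2$-Krawtchouk polynomial attached to site $i$ (so $d_i-i$ is constant mod $2$, hence $d_i+d_j\equiv i+j\bmod 2$ for all sites), these give
\[
T_i(x^{*})T_j(x^{*})=(-1)^{d_i+d_j}\,T_i(x)T_j(x).
\]

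\emph{Perfect return and the common phase.} Rationality of the $\bar\lambda_x$ makes $\{t:\ x\mapsto e^{-i\bar\lambda_x t}\text{ is constant}\}$ a subgroup $t_0\mathbb Z$ with $t_0>0$; at $t=t_0$, $c_{ij}(t_0)=c_0\delta_{ij}$ by \eqref{ekraw:orthonormality}, where $c_0:=e^{-i\bar\lambda_{-1}t_0}$, so $|c_{ij}(t_0)|=\delta_{ij}$. From $\bar\lambda_{-1}=-\bar\lambda_N$ one gets $e^{-2i\bar\lambda_{-1}t_0}=1$, so $c_0\in\{\pm1\}$ and $t_0\in\pi\mathbb Z$. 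I claim $c_0=-1$ iff the $2$-adic valuation $v_2(\nu(u))$ is constant on $U:=\{2x-N+1:-1\le x\le N\}$, where $\nu(u)=u(u^{2}+3N+5)$. Indeed $c_0=-1$ iff some $t_1>0$ satisfies $e^{-i\bar\lambda_x t_1}=-1$ for all $x$ (the perfect-return times being $t_0\mathbb Z$, with phase $c_0^{k}$ at $kt_0$); writing $\bar\lambda_x=-\nu(u_x)/D$ with $D=(N+1)(N+2)(N+3)$, such a $t_1$ exists exactly when all $\nu(u_x)$ share a common valuation $e_0$ (take $t_1=\pi D/2^{e_0}$; conversely reduce $t_1/(\pi D)$ to lowest terms and compare valuations). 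The arithmetic is then quick. For $N$ odd, $0,2\in U$ with $v_2(\nu(0))=\infty\neq v_2(6(N+3))=v_2(\nu(2))$, so $c_0=+1$. For $N$ even every $u\in U$ is odd, hence $u^{2}\equiv1\bmod 8$ and $v_2(\nu(u))=v_2\!\bigl(3(N+2)+(u^{2}-1)\bigr)$ with $8\mid u^{2}-1$; if $v_2(N+2)\le2$, i.e. $N\not\equiv6\bmod 8$, this equals $v_2(N+2)$ for every $u$, so $c_0=-1$; if $N\equiv6\bmod 8$ a congruence modulo $16$ (where $u^{2}\in\{1,9\}$, and $U$ meets both residue classes) shows $v_2(\nu(u))$ takes two values, so $c_0=+1$.

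\emph{Fractional revival at $t_0/2$.} Since $e^{-i\bar\lambda_x t_0/2}$ squares to $c_0$, write $e^{-i\bar\lambda_x t_0/2}=\phi\sigma_x$ with $\phi^{2}=c_0$ and $\sigma_x\in\{\pm1\}$; then $\bar\lambda_{x^{*}}=-\bar\lambda_x$ forces $\sigma_{x^{*}}/\sigma_x=\bar\phi/\phi=\overline{\phi^{2}}=c_0$. Pairing $x$ with $x^{*}$ and using the transformation rule,
\[
c_{ij}\!\left(\tfrac{t_0}{2}\right)=\frac{\phi}{2}\sum_{x=-1}^{N}\sigma_x\,T_i(x;\tfrac12)T_j(x;\tfrac12)\bigl(1+c_0(-1)^{d_i+d_j}\bigr),
\]
which vanishes when $d_i+d_j$ is odd if $c_0=+1$, and when $d_i+d_j$ is even if $c_0=-1$. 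As $d_i+d_j\equiv i+j\bmod2$ and, by the previous step, $c_0=+1$ exactly when $N$ is odd or $N\equiv6\bmod 8$, this is precisely the claimed dichotomy: $|c_{ij}(t_0/2)|=0$ for $i+j$ odd when $N\equiv6\bmod 8$, and for $i+j+N$ even otherwise (reading as ``$i+j$ odd'' for $N$ odd and ``$i+j$ even'' for $N$ even). The principal obstacle is the $2$-adic step pinning down $c_0$; the remainder is the reflection bookkeeping already behind \eqref{ekraw:sym} and Corollary~\ref{cor:pr}.
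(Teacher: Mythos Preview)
Your argument is correct and proceeds along a genuinely different route from the paper. The paper first invokes Lemma~\ref{lem:spec} (parity of $n_0\mu_x$, stated without proof) and then splits into two cases: for $N\equiv 0,2,3,4,5,7\bmod 8$ it uses that every $n_0\mu_x$ is odd so the half-time phases alternate as $(-1)^x$, and applies the $p\leftrightarrow q$ symmetry to the $(-1)^x$-weighted sum; for $N\equiv 1,6\bmod 8$ it uses the spectral reflection $\bar\lambda_{N-1-x}=-\bar\lambda_x$ together with an ad hoc check that $2n_0\bar\lambda_x$ is even to show the half-time phase is reflection-invariant. You instead handle all residues uniformly via the single reflection $x\mapsto x^{*}=N-1-x$: the symmetry \eqref{ekraw:sym} gives $T_i(x^{*})T_j(x^{*})=(-1)^{i+j}T_i(x)T_j(x)$, and the dichotomy is reduced to the value of the scalar $c_0=e^{-i\bar\lambda_{-1}t_0}\in\{\pm1\}$, which you pin down by a clean $2$-adic criterion (constancy of $v_2\bigl(u(u^{2}+3N+5)\bigr)$ on $U$). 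This replaces both the unproved lemma and the case split by an elementary valuation computation, and makes transparent why $N$ odd and $N\equiv 6\bmod 8$ behave alike (both force $c_0=+1$). The paper's argument has the advantage of exhibiting the explicit return time $t_0=2n_0\pi$ and, in the first case, the stronger pointwise statement $e^{-i\bar\lambda_x t_0/2}\propto(-1)^x$; your approach trades that for uniformity and a self-contained determination of $c_0$.
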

In order to prove this theorem, we need to examine the eigenvalues $\{ \bar{\lambda}_x\}_{x=-1}^N$.

\begin{lemma}\label{lem:spec}
When $p=\frac{1}{2}$, the difference between two consecutive eigenvalues is given by
\begin{equation}
\mu_x = \bar{\lambda}_{x+1}-\bar{\lambda}_{x}=-\frac{N^2-4Nx+4x^2-3N+8x+6}{\binom{N+3}{3}},\quad x=-1,0,\ldots ,N-1.
\end{equation}
Furthermore, let us introduce the number $n_0$ defined by
\begin{equation}
n_0=\begin{cases}
\frac{\binom{N+3}{3}}{2} & N\equiv 0,3,4,7 \mod 8\\
\frac{\binom{N+3}{3}}{4} & N\equiv 1,2,6 \mod 8\\
\frac{\binom{N+3}{3}}{8} & N\equiv 5 \mod 8
\end{cases}
\end{equation}
Then the following holds:
\begin{enumerate}[(i)]
\item When $N\equiv 0,2,3,4,5,7\mod 8$, $n_0\mu_x$ is always integer and odd.
\item When $N\equiv 1,6\mod 8$, $n_0 \mu_x$ is always integer and its parity depends on $x$.
\end{enumerate}
\end{lemma}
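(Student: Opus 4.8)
The plan is to reduce both parts to elementary computations: an exact evaluation of the cubic $\bar{\lambda}_x=-K_3^{-N-1}(x-N;\tfrac12)$, followed by a $2$-adic parity count.

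\emph{The formula for $\mu_x$.} Setting $1/p=2$ in $K_3^{-N-1}(y;p)={}_2F_1(-3,-y;N+1;1/p)$ and expanding the terminating series exhibits $\bar{\lambda}_x$ as an explicit cubic polynomial in $x$ with denominator $(N+1)(N+2)(N+3)=6\binom{N+3}{3}$. Forming the first difference $\mu_x=\bar{\lambda}_{x+1}-\bar{\lambda}_x$ kills the leading cubic term, and collecting what remains yields $\mu_x=-\bigl(N^2-4Nx+4x^2-3N+8x+6\bigr)/\binom{N+3}{3}$. I would then record the numerator in completed-square form, $N^2-4Nx+4x^2-3N+8x+6=\bigl(2(x+1)-N\bigr)^2+(N+2)$, which is the identity that makes everything below transparent (and which makes visible the symmetry $\mu_x=\mu_{N-2-x}$).

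\emph{Reduction of (i) and (ii).} Put $m=m(x)=2(x+1)-N$; as $x$ ranges over $\{-1,0,\dots,N-1\}$, the integer $m$ ranges over exactly the integers in $[-N,N]$ congruent to $N$ modulo $2$. Writing $n_0=\binom{N+3}{3}/c$ with $c\in\{2,4,8\}$ as prescribed, one gets $n_0\mu_x=-(m^2+N+2)/c$, so (i) and (ii) become two purely arithmetic assertions about the admissible values of $m$: that $c\mid m^2+N+2$, which is what makes $n_0\mu_x$ an integer (note $n_0$ itself need not be one, e.g.\ when $\binom{N+3}{3}$ is odd, so this integrality really is a divisibility statement about $m^2+N+2$); and the parity of $(m^2+N+2)/c$. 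Both are governed by the residues of squares: for $m$ odd, $m^2\equiv1\pmod 8$ (and, more finely, $m^2\equiv1$ or $9\pmod{16}$ according as $m\equiv\pm1$ or $m\equiv\pm3\pmod 8$), while for $m$ even, $m^2\equiv0\pmod 8$ if $4\mid m$ and $m^2\equiv4\pmod 8$ if $m\equiv2\pmod4$. Substituting into $m^2+N+2$ and reducing modulo $2c$ — modulo $16$ in the case $c=8$ (i.e.\ $N\equiv5\pmod 8$), where one needs the finer information that $v_2(m^2+N+2)$ equals $3$ for some admissible $m$ and exceeds $3$ for others — settles, class by class in $N\bmod 8$, both the divisibility $c\mid m^2+N+2$ and whether $(m^2+N+2)/c$ is odd for every admissible $m$ (case (i)) or has parity depending on $m$, hence on $x$ (case (ii)). In the latter situation one verifies in addition that both parities actually occur, which holds as soon as $N$ is large enough that all the relevant residues of $m$ modulo $4$ (modulo $8$ when $c=8$) are attained within $\{-N,\dots,N\}$.

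\emph{Main obstacle.} Nothing here is conceptually deep; the entire effort is the $2$-adic bookkeeping — handling the eight residue classes of $N\bmod 8$ together with the sub-cases coming from $m\bmod 4$ (and $m\bmod 8$ when $c=8$), and being careful that the integrality assertion is a statement about $m^2+N+2$ and not about $n_0$. Once the completed-square numerator $m^2+N+2$ is in hand, the case-checking is routine.
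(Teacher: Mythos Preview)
The paper states this lemma without proof, so there is nothing to compare your approach against; your method (explicit cubic, first difference, completed-square numerator $m^2+(N+2)$ with $m=2(x+1)-N$, then a $2$-adic case check) is the natural one and is entirely sound.

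The genuine issue is that carrying your own computation through \emph{disproves} the lemma as stated.  Your parenthetical for $N\equiv 5\pmod 8$ already says it: if $v_2(m^2+N+2)=3$ for some admissible $m$ and exceeds $3$ for others, then $(m^2+N+2)/8$ is odd for some $x$ and even for others, which is case~(ii), not case~(i).  Concretely, for $N=5$ the paper's own listed eigenvalues give $n_0\mu_x\in\{-4,-2,-1,-1,-2,-4\}$, of mixed parity.  The same $2$-adic check shows that $N\equiv 2\pmod 8$ also has mixed parity (e.g.\ $N=2$ gives $n_0\mu_x\in\{-2,-1,-2\}$), while $N\equiv 1\pmod 8$ is \emph{always odd} (since $m$ odd forces $m^2\equiv 1\pmod 8$ and $N+2\equiv 3\pmod 8$, so $v_2(m^2+N+2)=2$ exactly).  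The integrality assertion $c\mid m^2+N+2$ is correct in every residue class.

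So you should not frame this as a proof of the lemma.  What your argument actually establishes is the corrected split: case~(i) for $N\equiv 0,1,3,4,7\pmod 8$ and case~(ii) for $N\equiv 2,5,6\pmod 8$.  (This does not affect the downstream Theorem~\ref{theorem:fr} for $N\equiv 5$, since that case can be handled by the symmetry argument the paper uses for $N\equiv 1,6$; but the lemma itself needs amending.)
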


\begin{proof}[Proof of Theorem. \ref{theorem:fr}]
We will show that
\begin{equation}
t_0=2n_0\pi.
\end{equation}
From the lemma \ref{lem:spec}, one can easily see that $2n_0\mu_x $ is always an even integer so that $e^{2t_0\mu_x i}=1$.
Therefore, much like in corollary \ref{cor:pr}, one has $|c_{ij}(t_0)|=\delta_{ij}$.
\begin{itemize}
\item When $N\equiv 0,2,3,4,5,7\mod 8$, from the lemma \eqref{lem:spec}, one observes that $n_0\mu_x$ is always an odd integer and that
\begin{equation}
e^{-i(\bar{\lambda}_{x+1}-\bar{\lambda}_x)\frac{t_0}{2}}=(-1)^{x-1},
\end{equation}
which results in
\begin{equation}
c_{ij}\left( \frac{t_0}{2} \right) = -e^{-i\bar{\lambda}_{-1}xt_0}\sum_{x=-1}^N T_i\left( x;\frac{1}{2}\right) T_j\left( x;\frac{1}{2}\right)(-1)^{x}.
\end{equation}
We introduce the function $F_{ij}(p)$ as follows
\begin{align}
\begin{split}
&F_{i+1,j+1}(p)=\sum_{x=-1}^N T_{i+1}(x;p)T_{j+1}(x;p) (-1)^x,\\
    &=\frac{1}{\sqrt{\hat{h}_{i}\hat{h}_{j}}}\sum_{x=-1}^N \binom{N+1}{x+1}\frac{p^{x+1}q^{N-x}(-1)^x}{f^p_l(x)f^p_l(x+1)}\hat{K}_{i}^{(2)}(x;p)\hat{K}_{j}^{(2)}(x;p).
\end{split}
\end{align}
From \eqref{kraw:sym} and \eqref{ekraw:sym}, one finds 
\begin{align}
\begin{split}
&F_{i+1,j+1}(q)=\frac{1}{\sqrt{\hat{h}_i\hat{h}_j}}\sum_{x=-1}^N \binom{N+1}{x+1}\frac{q^{x+1}p^{N-x}(-1)^x}{f^q_l(x)f^q_l(x+1)}\hat{K}_i^{(2)}(x;q)\hat{K}_j^{(2)}(x;q)\\
&=\left( -\frac{p}{q}\right)^{i+j} \frac{1}{\sqrt{\hat{h}_i\hat{h}_j}}\sum_{x=-1}^N \left\{ \binom{N+1}{x+1}\frac{q^{x+1}p^{N-x}(-1)^x}{f^q_l(N-x)f^q_l(N-x-1)}\right.\\
&\left.\qquad \qquad \qquad \hat{K}_i^{(2)}(N-x-1;q)\hat{K}_j^{(2)}(N-x-1;q)(-1)^x\right\}\\
&=(-1)^{N+i+j+1}\left( \frac{p}{q}\right)^{i+j}F_{i+1,j+1}(p).
\end{split}
\end{align}
Therefore, we have
\begin{equation}
F_{i+1,j+1}\left( \frac{1}{2}\right)= (-1)^{N+i+j+1}F_{i+1,j+1}\left( \frac{1}{2}\right)
\end{equation}
to find 
\begin{equation}
c_{ij}\left( \frac{t_0}{2} \right)=-e^{-i\bar{\lambda}_{-1}\frac{t_0}{2}}F_{ij}\left( \frac{1}{2}\right) = 0\quad (N+i+j:even).
\end{equation}
\item When $N\equiv 1,6\mod 8$, one notices that $e^{i\pi \mu_{x}} = (-1)^{x-1}$ does not hold. From the direct calculation, one can show that
\begin{equation}
\bar{\lambda}_x=-K^{-N-1}_3\left( x-N;\frac{1}{2}\right)=K^{-N-1}_3\left( -x-1;\frac{1}{2}\right)=-\bar{\lambda}_{N-x-1}
\end{equation}
and 
\begin{equation}
2\bar{\lambda}_x n_0=\frac{(2x+1-N)(N^2-4Nx+4x^2+N+4x+6)}{12}
\end{equation}
is always an even integer for $N\equiv 1,6\mod 8$.
We thus obtain 
\begin{equation}\label{spec:sym}
e^{-\bar{\lambda}_{N-x-1}\frac{t_0}{2}i}=e^{\bar{\lambda}_{x}\frac{t_0}{2}i}=e^{\bar{\lambda}_{x}t_0i}e^{-\bar{\lambda}_x \frac{t_0}{2}i}=e^{-\bar{\lambda}_{x}\frac{t_0}{2}i}.
\end{equation}
Here we consider 
\begin{equation}
\tilde{F}_{i+1,j+1}(p)=\sum_{x=-1}^N T_{i+1}(x;p)T_{j+1}(x;p)e^{-\bar{\lambda}_x\frac{t_0}{2}i}.
\end{equation}
In this case one finds from \eqref{spec:sym} that
\begin{equation}
\tilde{F}_{i+1,j+1}(q)=\left(-\frac{p}{q}\right)^{i+j}\tilde{F}_{i+1,j+1}(p)
\end{equation}
to derive
\begin{equation}
c_{i,j}\left( \frac{t_0}{2}\right)= \tilde{F}_{ij}\left( \frac{1}{2}\right)=0\quad (i+j:odd).
\end{equation}
\end{itemize}
Combining the above results completes the proof.
\end{proof}
This is illustrated in Fig. \ref{qw_p=1/2_N=6}.
\begin{figure}[htbp]
\begin{center}
\begin{tabular}{cl}
\includegraphics[width=8cm]{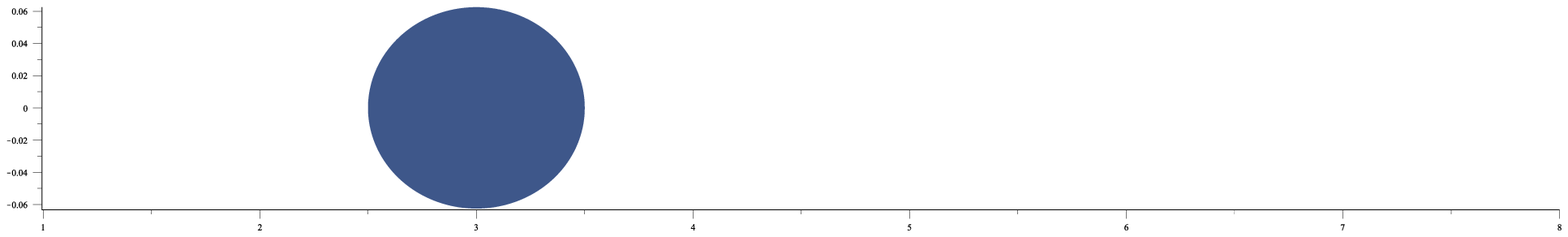}  & $(t=0)$\\
\includegraphics[width=8cm]{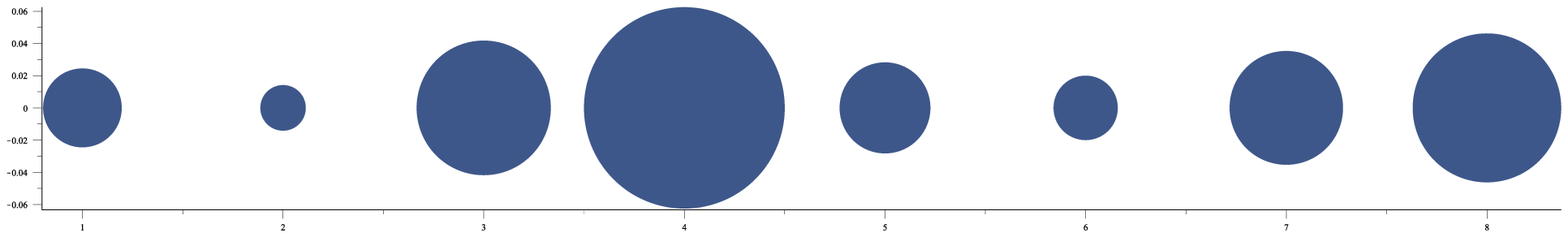} & $(t=\frac{21}{2}\pi)$\\
\includegraphics[width=8cm]{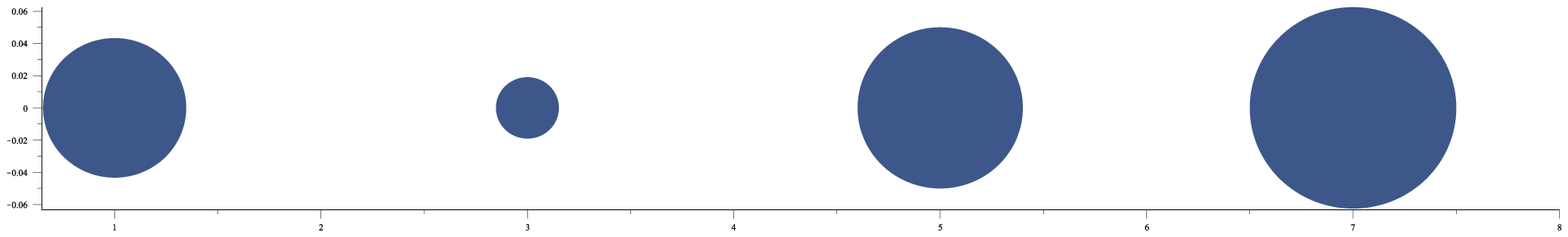} &  $(t=21\pi)$
\end{tabular}
\end{center}
  \caption{The plot of transition amplitude $\{ |c_{3i}(t)|\}_{i=0}^N$ with $N=6$ and $p=\frac{1}{2}$. The areas of the circles are proportional to $|c_{3j}(t)|$ at the given lattice point $j$. Fractional revival from $3$ to even sites occurs at $t=21\pi$.}
\end{figure}
\label{qw_p=1/2_N=6}
\section{Concluding remarks}

This paper has illustrated the type of generalized birth and death processes and continuous-time quantum walks that univariate exceptional orthogonal polynomials can underpin. To that end we focused on models connected to the exceptional $X_2$-Krawtchouk polynomials: a generalized birth and death process with states labelled by the integer set $\{0, 1,\dots, N, N+3\}$ and the corresponding quantum walk on the path weighted by the normalized recurrence coefficients of these polynomials. The salient and distinctive features of these problems have been discussed. The study presented here suggests the pursuit of various research questions.

It would obviously be of interest to similarly examine generalized BDP and quantum walks associated to other families of exceptional orthogonal polynomials. In dealing with ordinary polynomials, next to using the Krawtchouk set, the other simplest case is the one provided by the dual Hahn polynomials\cite{albanese2004mirror}. This brings to mind the idea of considering in this spirit applications of exceptional dual Hahn polynomials. A challenging problem is the construction of multivariate exceptional polynomials. Working this out would obviously open the way to new multidimensional extensions of BDP and quantum walks.

Another speculative question asks if the quantum walks on the paths weighted with the normalized recurrence coefficients of the exceptional orthogonal polynomials admit lifts to higher dimensional graphs. Studying the dynamics inferred by the polynomials from that perspective could prove revealing. See for instance \cite{christandl2005perfect, bernard2018graph} or \cite{miki2019quantum} in higher dimensions.

We shall recall that continuous-time quantum walks on weighted paths can be taken to define free-fermion chains with the time translation generator of the walk providing the couplings between the (second-quantized) fermions. A question that is raising much attention is the characterization of the entanglement within such chains \cite{eisler2018properties}. Advances have been made recently \cite{crampe2019free} by using the generalized Heun operators \cite{grunbaum2018algebraic} attached to the underlying (classical) orthogonal polynomials. Could this be extended to fermionic chains associated to exceptional polynomials such as the Krawtchouk ones?
We hope to report on some of these open questions in the near future.

\subsection*{Acknowledgement}
The authors would like to thank Akihiro Saito for dicussions.
The research of HM and ST is supported by JSPS KAKENHI (Grant Numbers 21H04073 and 19H01792 respectively) and that of LV by a discovery grant of the Natural Sciences and Engineering Research Council (NSERC) of Canada.

\bibliographystyle{unsrt}
\bibliography{ekraw}

\end{document}